\def\url@leostyle{%
 \@ifundefined{selectfont}{\def\UrlFont{\sf}}{\def\UrlFont{\scriptsize\ttfamily}}} \makeatother\urlstyle{leo}
\newtheorem{theorem}{Theorem}
\newtheorem{proposition}[theorem]{Proposition}
\theoremstyle{definition}
\newtheorem{definition}[theorem]{Definition}
\newtheorem{example}[theorem]{Example}
\theoremstyle{remark}
\newtheorem{remark}[theorem]{Remark}
\numberwithin{equation}{section}
\numberwithin{theorem}{section}
\definecolor{Red}{rgb}{0.8,0,0.1}
\def\cA{\mathcal{A}}
\def\cD{\mathcal{D}}
\def\cN{\mathcal{N}}
\def\bE{\mathbb{E}}
\def\bN{\mathbb{N}}
\def\bP{\mathbb{P}}
\def\bR{\mathbb{R}}
\newcommand{\1}{\mathbbm{1}}            
\DeclareMathOperator{\var}{\mathrm{V}@\mathrm{R}}           
\newcommand{\Ind}{{\mathds 1}}
\newcommand{\ind}[1]{\Ind_{\{#1\}}}
\def\namedlabel#1#2{\begingroup
    #2%
    \def\@currentlabel{#2}%
    \phantomsection\label{#1}\endgroup
}
\title{Unbiased estimation of risk}
\author{
        Marcin Pitera \and Thorsten Schmidt}
 \address{Institute of Mathematics, Jagiellonian University, \L{}ojasiewicza 6, 30-348 Cracow, Poland} \email{\url{marcin.pitera@im.uj.edu.pl}}
 \address{Department of Mathematical Stochastics, University of Freiburg, Eckerstr.1, 79104 Freiburg, Germany}
   \email{\url{thorsten.schmidt@stochastik.uni-freiburg.de} }
\definecolor{ts}{rgb}{0,0.6,0}
\date{First circulated: March 8, 2016, This version: \today. The authors express their gratitude to the referees and the associate editor for valuable comments and suggestions that helped to improve the paper. Part of the work of the first author was supported by the National Science Centre, Poland, via project 2016/23/B/ST1/00479.}
\newcommand{\R}{\bR}
\newcommand{\N}{\bN}
\begin{document}
\maketitle

\begin{abstract} 

The estimation of risk measures recently gained  a lot of attention, partly because of the backtesting issues of expected shortfall related to elicitability. In this work we shed a new and fundamental light on optimal estimation procedures of risk measures in terms of bias.
We show that once the parameters of a model need to be estimated, one has to take additional care when estimating risks. 
The typical plug-in approach, for example, introduces a bias which leads to a systematic \emph{underestimation} of risk. 

In this regard, we introduce a novel notion of \emph{unbiasedness} to the estimation of risk which is motivated by economic principles. In general,  the proposed concept does not coincide with the well-known statistical notion of unbiasedness. We show that an appropriate bias correction is available for many well-known estimators.  In particular, we consider value-at-risk and expected shortfall (tail value-at-risk). In the special case of normal distributions, closed-formed solutions for unbiased estimators can be obtained. 

We present a number of motivating examples which show the outperformance of unbiased estimators in many circumstances. The unbiasedness has a direct impact on backtesting and therefore adds a further viewpoint to established statistical properties. 
\end{abstract}

\vspace{2mm}

\keywords{\noindent Keywords:  value-at-risk, tail value-at-risk, expected shortfall, risk measure, estimation of risk measures, bias, risk estimation, elicitability, backtest, unbiased estimation of risk measures.} 

\section{Introduction}\label{S:introduction}
The estimation of risk measures  is an area of highest importance in the financial industry as risk measures play a major role in risk management and in the computation of regulatory capital, see \cite{MFE} for an in-depth treatment of the topic. Most notably, a major part of quantitative risk management is  of statistical nature, as highlighted for example in \cite{embrechts2014}. This article takes this challenge seriously and does not target risk measures themselves, but \emph{estimated} risk measures. 

Statistical aspects in the estimation of risk measures recently raised a lot of attention: see the related articles \cite{Davis2016} and \cite{ConDegSca2010,Acerbi2014Risk,Ziegel2014, Fissler2016,Frank2016}. Surprisingly, it turns out that statistical properties of risk estimators - related to the presence of bias - have not yet been analysed thoroughly. Such properties are very important from the practical point of view, as the risk bias usually leads to a systematic underestimation of risk. It is our main goal to give a definition of \emph{unbiasedness} that makes sense economically and statistically. The main motivation for this is the observation that the classical (statistical) definition of bias might be  desirable from a theoretical point of view, while it might be not prioritised by financial institutions or regulators, for whom the backtests are currently the major source of evaluating the estimation. 

There is an ongoing intensive debate in regulation and in science about the two most recognised risk measures: Expected Shortfall (ES) and Value-at-Risk ($\var$). This debate is stimulated by Basel III project \citepalias{Bas2013}, which updates regulations responsible for capital requirements for initial market risk models (cf. \citepalias{Bas2006,Bas1996}). In a nutshell, the old $\var$ at level $1\%$ is replaced with ES at level $2.5\%$. In fact, such a correction may reduce the bias, however only in the right scenarios. The academic response to this fact is not unanimous: while ES is coherent and takes into consideration the whole tail distribution, it lacks some nice statistical properties characteristic to $\var$. See e.g.~\cite{ConDegSca2010,Acerbi2014Risk,Ziegel2014,KelRos2016,YamYos2005,Emmeretal2015} for further details and interesting discussions. Also, the ES forecasts are believed to be much harder to backtest, a property essential from the regulator's point of view.

A further argument in this debate emerges from the results in \cite{Gneiting2011} (see also \cite{Web2006}), showing that ES is not \emph{elicitable}. This interesting concept was originally developed in \cite{osband1985information} from an economic perspective; the main motivation is to ensure truthful reporting by penalizing false reports. In Section \ref{sec:elicitabilty} we provide a detailed discussion of this topic. 
The lack of elicitability led to the discussion whether or not (and how) it is possible to backtest ES and we refer to \cite{Car2014,Ziegel2014, Acerbi2014Risk,Fissler2016} for further details on this topic. Quite recently it was shown in~\cite{Fissler2016}  that ES is however jointly elicitable with $\var$.\footnote{Another illustrative and self-explanatory example of this phenomena is variance. While not being elicitable, it is jointly elicitable with the mean; see~\cite{LamPenSho2008}.}
In particular, backtesting ES is possible; see Section \ref{sec5.3} for a backtesting algorithm of ES in our setting; however, the results have to be interpreted with care. 

Our article has two objectives. First, we introduce an economically motivated definition of unbiasedness: 
an estimation of risk capital is called \emph{unbiased}, if adding the estimated amount  of risk capital to the risky position makes the position acceptable; see Definition \ref{def:unbiased}. It seems to be surprising that this is not the case for estimators considered so far. Second, we want to shed a new light on backtesting starting from the viewpoint of the standard Basel requirements. The starting point is the simple observation that a biased estimation  naturally leads to a poor performance in backtests, such that the suggested bias correction should improve the results in backtesting.

In this regard, consider the standard (regulatory) backtest for $\var$ which is based on the rate of exception; see~{\cite{GioLau2003}. 
In Section ~\ref{sec:backtesting} we will show that a backtesting procedure based on the rate of exception will perform poorly if the estimation of the rate of exception is biased. Motivated by this, we systematically study bias of risk estimators and link our theoretical foundation to empirical evidence.

 Let us start with an example: consider i.i.d.\ Gaussian data with unknown mean and variance and assume we are interested in estimating $\var$ at the level $\alpha\in(0,1)$ ($\var_\alpha$). Denote by $x=(x_1,\dots,x_n)$ the observed data. The \emph{unbiased} estimator in this case  is given by
\begin{equation}
\hat{\var}_{\alpha}^{\textrm{u}}(x_1,\ldots,x_n):=-\left(\bar{x} +\bar{\sigma}(x)\sqrt{\frac{n+1}{n}}t_{n-1}^{-1}(\alpha)\right)\label{eq:var.bia},
\end{equation}
where $t_{n-1}^{-1}$ is the inverse of the cumulative distribution function of the Student-$t$-distribution with $n-1$ degrees of freedom, $\bar x$ denotes the sample mean and $\bar \sigma(x)$ denotes the sample standard deviation. 
We call this estimator the \emph{Gaussian unbiased estimator} and use this name throughout as reference to \eqref{eq:var.bia}. Note that the $t$-distribution arises naturally by taking into account that variance has to be estimated and that the bias correction factor is $\sqrt{\nicefrac{n+1}{n}}$.
Comparing this estimator to standard estimators on NASDAQ data provides some motivating insights which we detail in the following paragraph.

\subsubsection*{Backtesting value-at-risk estimating procedures.}\label{S:example}
To analyse the performance of various estimators of value-at-risk we performed a standard backtesting procedure. First, we estimated the risk measures using a learning period and then tested their adequacy in the backtesting period. The test was based on the standard {\it failure rate} ({\it exception rate}) procedure; see e.g. \cite{GioLau2003} and \citepalias{Bas1996}. Given a data sample of size $n$, the first $k$ observations were used for estimating the value-at-risk at level $\alpha$. Afterwards it was counted how many times the actual loss in the following $n-k$ observations exceeded the estimate. For good estimators, we would expect that the number of exceptions divided by $(n-k)$ should be close to $\alpha$.

More precisely, we considered returns based on (adjusted) closing prices of the NASDAQ100 index in the period from 1999-01-01 to 2014-11-25. The sample size is $n=4000$, which corresponds to the number of trading days in this period. The sample was split into 80 separate subsets, each consisting of the consecutive 50 trading days. The  backtesting procedure consisted in using the $i$-th subset for estimating the value of $\var_{0.05}$ and counting the number of exceptions in the $(i+1)$-th subset. The total number of exceptions in the 79 periods was divided by $79\cdot 50$. We compared the performance of the Gaussian unbiased estimator $\hat{\var}_{\alpha}^{\textrm{u}}$ to the three most common estimators of value-at-risk:
the empirical sample quantile $\hat{\var}_{\alpha}^{\textrm{emp}}$ (sometimes called  historical estimator\footnote{In fact there are numerous versions of the sample quantile estimator. We have decided to take the one  used by default both in {\bf R} and {\bf S} statistical software for samples from continuous distribution.}); the  modified Cornish-Fisher estimator $\hat{\var}_{\alpha}^{\textrm{CF}}$; and the classical Gaussian estimator $\hat{\var}_{\alpha}^{\textrm{norm}}$, which is obtained by inserting mean and sample variance into the value-at-risk formula under normality:  
\begin{align}
\hat{\var}_{\alpha}^{\textrm{emp}}(x) &:=-\left(x_{(\lfloor h\rfloor)}+(h-\lfloor h\rfloor)(x_{(\lfloor h+1\rfloor)}-x_{(\lfloor h \rfloor)}) \right),\label{eq:var.hist}\\
\hat{\var}_{\alpha}^{\textrm{CF}}(x) & :=-\left(\bar{x} +\bar{\sigma}(x)\bar{Z}^{\alpha}_{CF}(x)\right),\label{eq:var.mod}\\
\hat{\var}_{\alpha}^{\textrm{norm}}(x) & :=-\left(\bar{x}+\bar{\sigma}(x)\Phi^{-1}(\alpha)\right),\label{eq:var.norm}
\end{align}
where $x_{(k)}$ is the $k$-th order statistic of  $x=(x_1,\ldots,x_n)$, the value $\lfloor z \rfloor$ denotes the integer part of $z\in\bR$, $h=\alpha(n-1)+1$,  $\Phi$ denotes the cumulative distribution function of the standard normal distribution and $\bar{Z}^{\alpha}_{CF}$ is a standard Cornish-Fisher $\alpha$-quantile estimator (see e.g.~\cite[Section IV.3.4.3]{Car2009} for details).
\begin{table}[t]
\caption{Estimates of $\var_{0.05}$ for NASDAQ100 (first column) and for a sample from normally distributed random variable with mean and variance fitted to the NASDAQ data (second column), both for 4.000 data points. \emph{Exceeds} reports the number of exceptions in the sample, where the actual loss exceeded the risk estimate. The expected rate of  $0.05$ is only reached by the Gaussian unbiased estimator.}
\centering
\begin{tabular}{ll*{3}{c}c}\toprule
 Estimator&&  \multicolumn{2}{c}{NASDAQ} & \multicolumn{2}{c}{Simulated} \\
 &&  exceeds & percentage &  exceeds & percentage  \\ \midrule
Gaussian plug-in & $\hat{\var}_{\alpha}^{\textrm{norm}}$ & 241 & 0.061 & 221 & 0.056\\
Empirical & $\hat{\var}_{\alpha}^{\textrm{emp}}$ & 272 & 0.069 & 253 & 0.064 \\
Modified C-F & $\hat{\var}_{\alpha}^{\textrm{CF}}$ & 249 & 0.063 & 230 &  0.058\\
Gaussian unbiased& $\hat{\var}_{\alpha}^{\textrm{u}}$ & 217 & 0.055 & 197 & 0.050 \\ \bottomrule
\end{tabular}
\label{t:test1}
\end{table}

The results of the backtest are shown in the first part of Table~\ref{t:test1}. Surprisingly, the standard estimators show a rather poor performance. Indeed, one would expect a failure rate of 0.05 when using an estimator for the $\var_{0.05}$ and the standard estimators show a clear \emph{underestimation} of the risk, i.e.\ an exceedance rate higher than the expected rate. Only  the Gaussian  unbiased estimator is close to the expected rate, the empirical estimator having an exceedance rate which is 
25\% higher in comparison.  One can also show that a Student-$t$ (plug-in) estimator performs poorly, compared to Gaussian unbiased estimator.

To exclude possible disturbances of these findings by a bad fit of the Gaussian model to the data or possible dependences we additionally performed a simulation study: starting from an i.i.d. sample of normally distributed random variable with mean and variance fitted to the NASDAQ data we repeated the backtesting on this data; results are shown in the second column of Table~\ref{t:test1}. Let us first focus on the plug-in estimator $\hat{\var}_{\alpha}^{\textrm{norm}}$: expecting 200 exceedances (5\% out of 4.000) we experienced additional 41 exceedances on the NASDAQ data itself. On the simulated data, where we can exclude disturbances due to fat tails, correlation etc., still 21 unexpected exceedances were reported which is roughly 50\% of the additional exceedances on the original data. These exceedances are due to the biasedness of the estimator and can be removed by considering the unbiased estimator $\hat{\var}_{\alpha}^{\textrm{u}}$ as may be seen from the last line of Table~\ref{t:test1}. 
The results on the other estimators confirm these findings\footnote{Further simulations show that these results are stastically significant: for example, repeating the simulation 10.000 times 
allows to compute the mean exception rates (with standard errors in parentheses) for estimators given in Table~\ref{t:test1}. They are equal to 0.057 (0.0026), 0.067 (0.0028), 0.057 (0.0028),  and 0.050 (0.0027), respectively.}, the empirical estimator shows an exceedance rate being 28\% higher compared to the Gaussian unbiased estimator, which in turn perfectly meets the level $\alpha=0.05$. However, as has been pointed out in \cite{Gneiting2011}, conclusions about the performance of estimators solely based on backtesting have to be taken with care and we provide additional evidence in Section \ref{sec:nonstandardbacktesting}.

The structure of the paper is as follows: in Section \ref{S:EstimationRisk}, estimators of risk measures are formally introduced. Section \ref{S:Plug-in} discusses the frequently used concept of plug-in estimators. Section \ref{S:Robust} introduces the main concept of the paper, \emph{unbiasedness} in an economic sense, and Section~\ref{S:examples} gives a number of examples. Section \ref{sec:asymptotics} considers asymptotically unbiased estimators, while Section~\ref{sec:backtesting} outlines the bias estimation procedure as well as the relation between unbiasedness and regulatory backtesting. Section \ref{sec:empirics} gives a small empirical study of the proposed estimators and we conclude in Section \ref{sec:conclusion}.

\section{Measuring risk}\label{S:EstimationRisk}
In this section we give a short introduction to risk measures where the underlying model is not known and hence needs to be estimated. Our focus lies on the most popular family of risk measures, so-called \emph{law-invariant} risk measures. These measures solely depend on the distribution of the underlying losses, see \cite{MFE} for an outline and practical applications of risk measurement. Law-invariant risk measures for example contain the special cases value-at-risk, expected shortfall or spectral risk measures. 

We consider the estimation problem in a parametric setup. If the parameter space is chosen infinite-dimensional, this also contains the non-parametric formulation of the problem.
In this regard, let $(\Omega,\cA)$ be a measurable space and $(P_\theta:\theta \in \Theta)$ be a family of probability measures on this measurable space parametrized by $\theta$, an element of the parameter space $\Theta$. 
For simplicity, we assume that the measures $P_\theta$ are equivalent, such that their null-sets coincide. Otherwise it could be possible that some of the probability measures would be excluded almost surely by some observation which in turn would lead to unnecessarily complicated expressions. By  $L^{0}:=L^{0}(\Omega,\cA)$ we denote the (equivalence classes of) real-valued and measurable functions. In our context, the space $L^{0}$ typically corresponds to discounted cash flows or financial positions return rates.

For the estimation, we assume that we have a sample $X_1,X_2,\dots,X_n$ of observations at hand and we want to estimate the risk of a future position $X$.  Sometimes, we consider $x=(x_1,\dots,x_n)$  to distinguish specific realizations of $X_1,\dots,X_n$ from the sample random variables. In particular, we know that $x_{i}=X_{i}(\omega)$ for some $\omega\in\Omega$.

\begin{example}[The i.i.d.-case]\label{ex:iid}
Assume that the future position $X$ as well as the historical observations are independent and identically distributed (i.i.d.). This is the case, for example in the Black-Scholes model when one considers log-returns. More generally, this also holds in the case where the considered stock price $S$ follows a geometric L\'evy process (see~\cite[Section 5.6.2]{App2009} and references therein). If $t_i,$ $i=0,\dots,n+1$ denote equidistant times with $\Delta=t_i-t_{i-1}$, then the log-returns $X_i := \log (S_{t_i}) - \log(S_{t_{i-1}})$, $i=1,\dots,n$ are i.i.d. and the risk of the future position $S_{t_{n+1}}$ can be described in terms of $X:=\log (S_{t_{n+1}}) - \log(S_{t_{n}})$.
\end{example}

To quantify the risk associated with a future position $X\in L^0$, we introduce a concept of {\it a risk measure}.
\begin{definition}\label{def:risk.measure}
A risk measure $\rho$ is a mapping from $L^0$ to $\bR\cup \{+\infty\}$.
\end{definition}
Typically, one assumes additional properties for a risk measure $\rho$ such as {\it counter monotonicity}, {\it convexity}, and {\it translation invariance}. For details, see \cite{FolSch2002} and references therein. For brevity, as we are interested in problems related to estimation of $\rho$, we have decided not to repeat detailed definitions here. Let us alone mention that from a financial point of view the value $\rho(X)$ is a quantification of risk for financial future position $X$ and is often interpreted as the amount of money one has to add to the position $X$ such that $X$ becomes acceptable. Hence, positions $X$ with $\rho(X) \le 0$ are considered acceptable (without adding additional money).

A priori, the definition of a risk measure is formulated without any relation to the underlying probability. However, in most practical applications  one typically considers law-invariant risk-measures. Roughly spoken, a risk-measure is called {law-invariant} with respect to a probability measure $P$, if $\rho(X)=\rho(\tilde X)$ whenever the laws of $X$ and $\tilde X$ coincide under $P$, see for example \cite[Section 5]{FolKni2013}. 

Hence, $\rho$ typically depends on the underlying probability measure $P_\theta$ and consequently, we obtain a family of risk-measures $(\rho_{\theta})_{\theta\in\Theta},$ which we again denote by $\rho$. Here,  $\rho_\theta$ is the risk-measure obtained under $P_\theta$. 
Being law-invariant, the risk-measure can be identified with a function of the cumulative distribution function of $X$. More precisely, we obtain the following definition. Denote by $\cD$ the convex space of cumulative distribution functions of real-valued random variables. 
\begin{definition}
The family of risk-measures $(\rho_{\theta})_{\theta\in\Theta}$ is called \emph{law-invariant}, if there exists a function $R:\cD \to \R\cup \{+\infty\}$ such that for all $\theta \in \Theta$ and $X\in L^0$
\begin{align}\label{def:R}
\rho_\theta(X)=R(F_X(\theta)),   
\end{align}
 $F_X(\theta)=P_\theta(X \le \cdot)$ denoting the cumulative distribution function of $X$ under the parameter $\theta$.  
\end{definition}

We aim at estimating the risk of the future position  when $\theta\in\Theta$ is unknown and needs to be estimated from a data sample $x_1,\dots,x_n$. If $\theta$ were known, we could directly compute the corresponding risk measure $\rho_{\theta}$ from $P_\theta$ and would not need to consider the family $(\rho_{\theta})_{\theta\in\Theta}$.  Various estimation methodologies are at hand, the most common one being the plug-in estimation; see Section~\ref{S:Plug-in} for details.

\begin{definition}\label{def:estimator}
An \emph{estimator} of a risk measure is a Borel function $\hat{\rho}_n: \R^n\to \bR\cup\{+\infty\}$.
\end{definition}
Sometimes we will call $\hat \rho_n$ also risk estimator. The value $\hat{\rho}_{n}(x_1,\dots,x_n)$ corresponds to the \emph{estimated} amount of capital which should classify, after adding the capital to the position, the future position $X$  acceptable.

Given random sample $X_1,X_2,\dots,X_n,$ we also denote by $\hat \rho_n$  the random variable
\[
\hat{\rho}_n(\omega):=\hat{\rho}_n(X_1(\omega),\dots,X_n(\omega)), \quad \omega \in \Omega,
\]
corresponding to the estimator $\hat \rho_n$. 
By $\hat \rho$ we denote the sequence of risk estimators $\hat \rho = (\hat \rho_n)_{n\in\bN}$. If there is no ambiguity, we will call $\hat{\rho}$ also risk estimator and sometimes even write $\hat\rho$ instead of $\hat\rho_n$.

The concept of {\it estimator} given in Definition~\ref{def:estimator} is very general. One very common way in practical estimation of risk measures is to separate the estimation of the distribution of the underlying random variable from the estimation of the risk measure. This leads to the well established plug-in estimators, which we discuss in the following section.

\section{Plug-in estimation}\label{S:Plug-in}

A common way to estimate risk is the \emph{plug-in estimation}; see \cite{Ace2007,ConDegSca2010,FolKni2013} and references therein. The idea behind this approach is to use the highly developed tools for estimating the distribution function of $X$ and plug in this estimate into the desired risk measure. Denote the estimator of the unknown distribution by $\hat{F}_{X}$ and recall the function $R$ from Equation \eqref{def:R}. Then the \emph{plug-in estimator} $\hat{\rho}_{\textrm {plugin}}$ is given by 
\begin{equation}\label{eq:plugin-nonpar}
\hat{\rho}_{\textrm {plugin}}(x):=R(\hat{F}_{X}).
\end{equation}

More specifically, considering the parametric case with a family of law-invariant risk-measure $(\rho_\theta)_{\theta \in \Theta}$, we obtain from ~\eqref{def:R}, that the plug-in estimator is given by
\[
\hat{\rho}_{\textrm {plugin}}(x)=R(F_X(\hat \theta))=\rho_{\hat\theta}(X),
\]
where $\hat\theta$ denotes parameter estimator given sample $x$.

Let us now present some specific examples, where we provide explicit formulas for the plug-in estimators of the considered risk both for non-parametric and parametric case.

\begin{example}[Empirical distribution plug-in estimation]\label{ex:307}
As an example we could use the empirical distribution for the plug-in estimator. The assumption is that $X_1,\dots,X_n$ are independent, having the same distribution like $X$,  and a sample $x=(x_1,\ldots,x_n)$ is at hand. Then, the empirical distribution is given by
\[
\hat{F}_X(t):=\frac{1}{n}\sum_{i=1}^{n}\Ind_{\{x_i\leq t\}}, \qquad t \in \R,
\]
where $\Ind_{A}$ is indicator of event $A$. It is a discrete distribution and hence $R(\hat F_X)$ is easy to compute.
\end{example}

\begin{example}[Kernel density estimation]\label{ex:kernel}
Assuming that $X$ is (absolutely) continuous, one of the most popular non-parametric estimation techniques for the density of $X$ is the so-called {\it kernel density estimation}, see for example \cite{Ros1956,Par1962}. Instead of estimating the distribution itself, one focusses on estimating the probability density function, as in the continuous case we could recover one from another. Given the sample $x=(x_1,\ldots,x_n)$, kernel function $K:\bR\to\bR$ and bandwidth parameter $h>0$ (see \cite{Sil1986} for details), the estimator $\hat f$ for the unknown density $f$ is given by
\[
\hat f(z)=\frac{1}{nh} \sum_{i=1}^{n} K\left(\frac{z-x_i}{h}\right),\quad z\in\mathbb{R}.
\]
The most popular kernel functions are the Gaussian kernel $K^1$ and the Epanechnikov kernel $K^2$, given by
\[
K^1(u)=\frac{1}{\sqrt{2\pi}} e^{-\frac 12u^{2}}\quad\textrm{and}\quad K^2(u)=\frac 34(1-u^2)\,\mathbf {1} _{\{|u|\leq 1\}}.
\]
The optimal value of the bandwidth parameter could also be estimated, but this depends on additional assumptions. For example, one could show that if the sample is Gaussian, then the optimal choice of bandwidth parameter is approximately $1.06\hat\sigma n^{-1/5}$, where $\hat\sigma$ is the standard deviation of the sample.
\end{example}

\begin{example}[Plug-in estimators under normality]\label{ex:plugin}
Let us assume that $X$ is normally distributed under $P_{\theta}$, for any $\theta=(\theta_1,\theta_2)\in\Theta=\R\times\R_{>0}$, where $\theta_1$ and $\theta_2$ denote mean and variance, respectively. Given sample $x=(x_1,\ldots,x_n)$, let $\hat\theta_1$ and $\hat\theta_{2}$ denote the estimated parameters (obtained e.g. using MLE). Then, assuming that $\rho$ is {\it translation invariant} and {\it positively homogenous} (see \cite{FolSch2002} for details), the classical  \emph{plug-in estimator} $\hat{\rho}$ can be computed as follows
\begin{equation}\label{eq:normal.plugin}
\hat{\rho}(x)=R(F_X(\hat\theta))=\rho_{\hat{\theta}}(X)=\rho_{\hat\theta}\left(\hat\theta_1\, \frac{X-\hat\theta_2}{\hat\theta_1}+\hat\theta_2\right)=-\hat\theta_2+\hat\theta_1R(\Phi),
\end{equation}
where $\Phi$ denotes the cumulative distribution function of the standard normal distribution. If we are interested in estimating the value-at-risk, then the estimator given in Equation \eqref{eq:normal.plugin} coincides with the one defined in Equation \eqref{eq:var.norm}.
\end{example}

\begin{example}[Plug-in estimator for the $t$-distribution]\label{ex:student}
Assume now that $X$ has a generalized $t$-distribution under $P_{\theta}$, for any $\theta=(\theta_1,\theta_2,\theta_3)\in\Theta=\R\times\R_{>0}\times\N_{>2}$, where $\theta_1$, $\theta_2$ and $\theta_3$ denote mean, variance and degrees of freedom parameter, respectively. Given the sample $x=(x_1,\ldots,x_n)$, let $\hat\theta_1$, $\hat\theta_{2}$ and $\hat\theta_{3}$ denote the estimated parameters (obtained e.g. using Expectation-Maximization method; see \cite{FerSte1998} for details). Then, assuming that $\rho$ is {\it translation invariant} and {\it positively homogenous}, the \emph{plug-in estimator} can be expressed as
\begin{equation}\label{eq:var.stud}
\hat{\rho}(x)=-\hat\theta_1+\hat\theta_2 \sqrt{\frac{\hat\theta_3-2}{\hat\theta_3}}R(t_{\hat\theta_3}),
\end{equation}
where $t_v$ corresponds to the standard $t$-distribution with $v$ degrees of freedom. In particular, for value-at-risk at level $\alpha$ we get $R(t_{\hat\theta_3})=-t^{-1}_{\hat\theta_3}(\alpha)$.
\end{example}

\begin{example}[Plug-in estimator using extreme-value theory]\label{ex:gpd}
Let us assume that $X$ is absolutely continuous for any $\theta\in\Theta$. For any threshold level $u<0$ we define the conditional excess loss distribution of $X$ under $\theta\in\Theta$ as
\[
[F_{X}]_{u}(\theta,t)=P_{\theta}(X\leq u+t|X< u),\quad \textrm{for } t\leq 0.
\]
Roughly speaking, The Pickands--Balkema--de Haan theorem states that for any $\theta\in\Theta$, if $u\to -\infty$, then the conditional excess loss distribution should converge to some Generalized Pareto Distribution (GPD).\footnote{Under some mild condition imposed on distribution $F_{X}(\theta)$. This includes e.g. families of normal, lognormal, $\chi^2$, $t$, $F$, gamma, exponential, and uniform distributions.} We refer to~\cite{McN1999, MFE} and references therein for more details.
This result can be used to provide an approximative formula for the risk measure estimator, if the risk measure solely depends on the lower tail of $X$. This is the case e.g. for value-at-risk or expected shortfall, especially when small risk levels $\alpha\in (0,1)$ are considered. Given threshold level $u<0$, sample $x=(x_1,\ldots,x_n)$ and using so called {\it Historical-Simulation Method} (see e.g. \cite{McN1999} for details) we define $\hat{F}_X$ for any $t<u$ setting
\begin{equation}\label{eq:GPD.dist}
\hat F_X(t)=\frac{k}{n}\left(1+\hat\xi\frac{u-t}{\hat\beta}\right)^{-1/\hat\xi},
\end{equation}
where $k$ is the number of observations that lie below threshold level $u$ and $(\hat\xi,\hat\beta)$ correspond to shape and scale estimators in the GPD family.
The estimators $\hat\xi$ and $\hat\beta$ can be computed taking only (negative values of) observations that lie below threshold level $u$ and using e.g.~the {\it Probability Weighted Moments Method} (again, see \cite{McN1999} and references therein for details). Now, assuming that the function $R$ given in \eqref{def:R} depends only on the tail of the distribution, i.e. for any $\theta\in\Theta$ we only need $F_X(\theta)|_{ (-\infty,u)}$ to calculate $R(F_X(\theta))$, one could obtain the formula for the plug in estimator using \eqref{eq:GPD.dist}. In particular, for value-at-risk at level $\alpha\in (0,1)$, if only $\alpha<\hat F_X(u)$, then we obtain the estimator
 \begin{equation}\label{eq:var.gpd}
\hat{\var}_{\alpha}^{\textrm{GPD}}(x)=\hat F^{-1}_X(\alpha)=-u+\frac{\hat\beta}{\hat\xi}\left(\left(\frac{\alpha n}{k}\right)^{-\hat\xi}-1\right).
\end{equation}
Please note that this estimator might be in fact considered non-parametric, as it approximates the value of $\rho(X)$ for a large class of distributions including almost all ones used in practice.
\end{example}

\section{Unbiased estimation of risk}\label{S:Robust}

Quite surprisingly, the plug-in procedure often leads to an underestimation of risk as already explained  in Section~\ref{S:example}. It is our goal to introduce a new class of estimators, which we call unbiased, not suffering from this deficiencies.

\begin{definition}\label{def:unbiased}
The  estimator $\hat \rho_n$ is called \emph{unbiased} for $\rho(X)$, if for all $\theta \in \Theta$,
\begin{equation}\label{eq:rho.unbiased}
\rho_\theta(X+\hat \rho_n)=0.
\end{equation}
\end{definition}
An unbiased estimator has the economically desirable feature, that adding the estimated amount of risk capital $\hat \rho_n$ to the position $X$ makes the position $X+\hat \rho_n$ acceptable under all possible scenarios $\theta \in \Theta$. Requiring equality in Equation \eqref{eq:rho.unbiased} ensures that the estimated capital is not too high.  
It is worth pointing out, that except for the i.i.d.\ case, the distribution of $X+\hat \rho_n$ does also depends on the dependence structure of 
$X,X_1,\dots,X_n$ and not only  on the (marginal) laws. 

From the financial point of view, given a historical data set, or even a stress scenario $(x_1,\ldots,x_n)$, the number $\hat{\rho}_{n}(x_1,\ldots,x_n)$ is used to determine the capital reserve for position $X$, i.e. the minimal amount for which the risk of the secured position $\xi^n(x_1,\ldots,x_n):=X+\hat{\rho}_{n}(x_1,\ldots,x_n)$ is acceptable. As the parameter $\theta$ is unknown, it would be highly desirable to minimise the risk of the secured position $\xi^n$, now considered as the function of random variables $X_1$, $\ldots$, $X_n$. If we do this for any  $\theta\in\Theta$, then our estimated capital reserve would be close to the real (unknown) one. To do so, we want the (overall) risk of position $\xi^n$ to be equal to 0, for any value of $\theta\in\Theta$. This is precisely  unbiasedness in the sense of Definition~\ref{def:unbiased}.

\begin{remark}[Relation to the statistical definition of unbiasedness]
Definition \ref{def:unbiased} differs from unbiasedness in the statistical sense: the estimator $\hat \rho_n$ is called \emph{statistically unbiased}, if 
\begin{equation}\label{eq:rho.unbiased.classic}
E_{\theta}[\hat \rho_n]=\rho_{\theta}(X), \qquad \text{for all }\theta \in \Theta,
\end{equation}
where $E_{\theta}$ denotes the expectation operator under $P_{\theta}$. 
While the condition \eqref{eq:rho.unbiased.classic} is always desirable from the theoretical point of view, it might not be prioritised by the institution interested in risk measurement, as the mean value of estimated capital reserve does not determine the risk of the secured position $X+\hat{\rho}_n$. Let us explain this in more detail:
In practice, the main goal is to define an estimator in such a way, that it behaves well in various backtesting or stress-testing procedures. The types of conducted tests are usually given by the regulator (see for example  \citepalias{Bas2011}). In the case of value-at-risk the so-called {\it failure rate} procedure is often considered. As explained in Section~\ref{S:example}, this procedure focusses on the rate of exceptions, i.e. ratio of scenarios in which estimated capital reserve is insufficient. This nonlinear function is different from the linear measure given by the average value of estimated capital reserve, underlining the need for a different definition of a bias. See also Remark \ref{rem:probability.bias} for further explanation.
\end{remark}

\begin{remark}[Relation to probability unbiasedness]\label{rem:probability.bias}
In \cite{FraHer2012}, the authors introduced the concept of a \emph{probability unbiased} estimation:  denote by $F_X(\theta,t)=P_\theta(X \le t)$, $t \in \R$  the cumulative distribution function of $X$ under $P_\theta$. Then the estimator $\hat \rho_n$ is called \emph{probability unbiased}, if
\begin{equation}\label{eq:prob.bias}
E_{\theta}[F_{X}(\theta,-\hat{\rho}_n)]=F_{X}(\theta,-\rho_{\theta}(X)),\qquad \textrm{ for all } \theta\in\Theta.
\end{equation}
Intuitively, the left hand side corresponds to the average probability, that our estimated capital reserve would be insufficient, while the right hand side corresponds to the probability of insufficiency of the theoretical capital reserve.
This approach is proper for value-at-risk in the strongly restricted setting of the i.i.d.\ Example \ref{ex:iid}. In fact, in that setting, it coincides with our definition of  unbiasedness from Definition \ref{def:unbiased}: 
indeed, assume that $F_X(\theta)$ is continuous and that $X_1,\dots,X_n,X$ are i.i.d. Then  $\hat\rho_n$ and $X$ are independent and hence
\[
E_{\theta}[F_{X}(\theta,-\hat{\rho}_n)]=P_{\theta}[X+\hat\rho_n<0].
\]
On the other hand we know that, for $\rho_{\theta}$ being value-at-risk at level $\alpha$, we obtain $F_{X}(\theta,-\rho_{\theta}(X))=\alpha$, so \eqref{eq:prob.bias} is equivalent to
\begin{equation}\label{eq:var.bt}
P_{\theta}[X+\hat\rho_n<0]=\alpha.
\end{equation}
Now it is easy to show that this is equivalent to  
\[
\rho_{\theta}(X+\hat\rho_n)=\inf\{x\in\bR\colon P_{\theta}[X+\hat\rho_n+x<0]\leq \alpha\}=0.
\]
In the general case we consider here, a more flexible concept is needed to define the risk estimator bias. In particular, the average probability of insufficiency does not contain information about the level of capital deficiency. This, however, is a key concept, e.g., when considering expected shortfall; compare Example \ref{ex:3}.
\end{remark}


\begin{remark}[Relation to level adjustment]\label{rem:level_adjustment}
 In \cite{Frank2016} and \cite{FraHer2012}, an adjustment of the level $\alpha$ has been proposed to take the bias into account. The methodology is tailor-made to the unbiased estimation of the probability level of crossings (exceptions); see Remark~\ref{rem:probability.bias}. We discuss this issue using the notation introduced in Remark~\ref{eq:prob.bias}. The  value-at-risk at level $\alpha$ if the parameter $\theta \in \Theta$ were known is given by $-F_X^{-1}(\theta,\alpha)$. Then the expected number of exceedances of the position $X$ obtained by adding the value-at-risk to the position equals $\alpha$:
 \[
E_{\theta}[ \ind {X- F_X^{-1}(\theta,\alpha) < 0}] = P_{\theta}[X<F_X^{-1}(\theta,\alpha)] = \alpha.
\]
 
 In fact, estimation can not only be done for a single $\alpha$, but for all $\alpha\in(0,1)$. We denote the estimators by $\hat \rho(\alpha)$ and observe that, as function of $\alpha$, they are typically continuous and decreasing (the lower $\alpha$, the higher risk capital is needed to ensure that the probability crossing this levels is as small as $\alpha$). If an established estimation procedure is at hand, represented by the family $(\hat \rho(\alpha))_{\alpha \in (0,1)}$, one can  adjust $\alpha$ to remove  estimation bias. In particular, if the average exceedance rate after estimation should match $\alpha$ one will look for the adjusted level $\alpha_{\textrm{adj}}$, such that
\[
\alpha = E_{\theta}[ \ind {X- \hat \rho(\alpha_{\textrm{adj}})< 0}] = \int F_X(\theta,y) p_{\hat \rho(\alpha_{\textrm{adj}})} (dy);
\]
 here, $p_{\hat \rho(\alpha_{\textrm{adj}})}$ denotes the density of $\hat \rho(\alpha_{\textrm{adj}})$. The required $\alpha_{\textrm{adj}}$ can be found numerically when the density of the estimator is at hand. Note that this requirement exactly matches \eqref{eq:var.bt} with the difference that not the estimator is modified but the level $\alpha$ adjusted. A comparison to the unbiased estimator in Equation \eqref{eq:var.bia} reveals that the adjustment  $\alpha_{\textrm{adj}}$ in the Gaussian case also depends on the sample size $n$, which might be undesirable in practice. Also, this method is specific to value-at-risk. We refer to \cite{FraHer2012} and \cite{Frank2016} for details and examplary level adjustment algorithms.
 \end{remark}

\begin{remark}[Relation to subadditive and loss-based risk measures]
It is interesting to analyze the minimal requirements which render unbiasedness as in Definition \ref{def:unbiased} useful: the only requirement is that a position $X\in L^0$ is acceptable, if $\rho(X)\le 0$. This is directly linked to a proper normalization of $\rho$. Even more, it does not require that $\rho$ is coherent, nor even that it is loss-based, as in \cite{ContDeguestHe}, or subadditive, as in \cite{ElKarouiRavanelli}. Consequently, the proposed estimation methodology also applies to these interesting classes of risk measures. A detailed analysis is, however, beyond the scope of this article. 
\end{remark}

\section{Examples}\label{S:examples}

In this section we precent some examples highlighting the application of the concept of unbiased risk estimators.

\begin{example}[Unbiased estimation of the mean]\label{ex:1}
Assume that $X$ is integrable for any $\theta\in\Theta$, and consider a position acceptable if it has non-negative mean. This corresponds to the  family $\rho$ of risk measures 
\[
\rho_{\theta}(X)=E_{\theta}[-X],\qquad \theta\in\Theta.
\]
Clearly $\rho$ is law-invariant. Corresponding to Equation \eqref{eq:rho.unbiased}, a risk estimator $\hat \rho$ is unbiased in this setting if 
\[
0=\rho_\theta(X+\hat \rho)=E_{\theta}[-(X+\hat{\rho})]=\rho_{\theta}(X)-E_\theta[\hat{\rho}].
\]
Therefore, the estimator $\hat \rho$ is unbiased if and only if it is statistically unbiased for any $\theta\in\Theta$. Hence, the negative of the sample mean, given by
\[
\hat{\rho}_{n}(x_1,\ldots,x_n)=-\frac{\sum_{i=1}^{n}x_i}{n}, \qquad n \in \bN,
\]
is an unbiased estimator of the risk measure of position $X$.
\end{example}

\begin{example}[Unbiased estimation of value-at-risk under normality]\label{ex:2}
Let $X$ be normally distributed with mean $\theta_1$ and variance $\theta_2$ under $P_{\theta}$, for any $\theta=(\theta_1,\theta_1)\in\Theta=\R\times\R_{>0}$. For a fixed $\alpha\in (0,1)$, let
\begin{align}\label{def:var}
\rho_{\theta}(X)=\inf\{x\in\bR\colon P_{\theta}[X+x<0]\leq \alpha\},\qquad \theta\in\Theta,
\end{align}
denote value-at-risk at level $\alpha$. As $X$ is absolutely continuous, unbiasedness as defined in Equation \eqref{eq:rho.unbiased} is equivalent to
\begin{equation}\label{eq:var.2}
P_{\theta}[X+\hat{\rho}<0]=\alpha, \qquad \text{for all }\theta \in \Theta.
\end{equation}
This concept coincides with the definition of a \emph{probability unbiased} estimator of value-at-risk (see Remark~\ref{rem:probability.bias} for details). We define estimator $\hat{\rho}$, as 
\begin{equation}\label{eq:est.var}
\hat{\rho}(x_1,\ldots,x_n)=-\bar{x}-\bar{\sigma}(x)\sqrt{\frac{n+1}{n}}t^{-1}_{n-1}(\alpha),
\end{equation}
where $t_{n-1}$ stands for cumulative distribution function of the student-$t$ distribution with $n-1$ degrees of freedom and
\[
\bar{x}:=\frac{1}{n}\sum_{i=1}^{n}x_{i},\quad \bar{\sigma}(x):=\sqrt{\frac{1}{n-1}\sum_{i=1}^{n}(x_i-\bar{x})^{2}},
\]
denote the efficient estimators of mean and standard deviation, respectively. We show that the estimator $\hat{\rho}$ is an unbiased risk estimator:  note that $X\sim \cN(\theta_1,(\theta_2)^2)$ under $P_\theta$. Using the fact that $X$, $\bar{X}$ and $\bar{\sigma}(X)$ are independent for any $\theta\in\Theta$ (see, e.g., \cite{Bas1955}), we obtain 
\[
T:=\sqrt{\frac{n}{n+1}}\cdot\frac{X-\bar{X}}{\bar{\sigma}(X)}=\frac{X-\bar{X}}{\sqrt{\frac{n+1}{n}}\theta_2}\cdot\sqrt{\frac{n-1}{\sum_{i=1}^{n}(\frac{X_i-\bar{X}}{\theta_2})^{2}}}\sim t_{n-1}.
\]
Thus, the random variable $T$ is a pivotal quantity and
\[
P_{\theta}[X+\hat{\rho}<0]=P_{\theta}[T<q_{t_{n-1}}(\alpha)]=\alpha,
\]
which concludes the proof.
\end{example}

\begin{remark}\label{rem:gaussian.bias}
It follows that the difference between Gaussian unbiased estimator defined in \eqref{eq:est.var} and the classical  plug-in Gaussian estimator given in \eqref{eq:var.norm} is equal to
\begin{align}
\hat{\var}_{\alpha}^{\textrm{u}}(x)-\hat{\var}_{\alpha}^{\textrm{norm}}(x)=-\bar{\sigma}(x)\left(\sqrt{\frac{n+1}{n}}t^{-1}_{n-1}(\alpha)-\Phi^{-1}(\alpha)\right).
\end{align}
Consequently, as $\bar\sigma(x)$ is consistent, and
\[
\sqrt{\frac{n+1}{n}}t^{-1}_{n-1}(\alpha)\xrightarrow{n\to\infty} \Phi^{-1}(\alpha),
\]
we obtain that the bigger the sample, the closer the estimators are to each other -- the bias of plug-in estimator decreases.
\end{remark}

The procedure from the previous example can be applied to almost any (reasonable) coherent risk measure. We choose expected shortfall as an example to illustrate how this can be achieved.
\begin{example}[Unbiased estimation of expected shortfall under normality]\label{ex:3}
As before, let $X$ be normally distributed with mean $\theta_1$ and variance $\theta_2$ under $P_{\theta}$, for any $\theta=(\theta_1,\theta_1)\in\Theta=\R\times\R_{>0}$. Let us fix $\alpha\in (0,1)$. The expected shortfall at level $\alpha$ under a continuous distribution is given by
\[
\rho_{\theta}(X)=E_{\theta}[-X | X\leq q_{X}(\theta,\alpha)],
\]
where $q_{X}(\theta,\alpha)$ is $\alpha$-quantile of $X$ under $P_{\theta}$, that coincides with the negative of value-at-risk at level $\alpha$ from Equation \eqref{def:var}; see Lemma 2.16 in \cite{MFE}. Due to translation invariance and positive homogeneity of $\rho_{\theta}$, exploiting the fact that $X$, $\bar{X}$ and $\bar{\sigma}(X)$ are independent for normally distributed $X$, a good candidate for $\hat{\rho}$ is
\begin{equation}\label{eq:est.tvar}
\hat{\rho}(x_1,\ldots,x_n)=-\bar{x}-\bar{\sigma}(x)a_n,
\end{equation}
for some $(a_n)_{n\in\bN}$, where $a_n\in\bR$. There exists a sequence $(a_n)_{n\in\bN}$ such that $\hat \rho$ is unbiased:  As $\rho_{\theta}$ is positively homogeneous, we obtain for all $\theta \in \Theta$
\begin{align}
\rho_{\theta}(X+\hat{\rho}) &=\theta_2  \sqrt{\frac{n+1}{n}}\, \rho_{\theta}\bigg(\frac{X-\bar{X}-a_n\bar{\sigma}(X)}{\theta_2\sqrt{\frac{n+1}{n}}}\bigg)\nonumber\\
& =\theta_2  \sqrt{\frac{n+1}{n}}\, \rho_{\theta}\Bigg(\frac{X-\bar{X}}{\theta_2\sqrt{\frac{n+1}{n}}}-\frac{a_n\sqrt{n}}{\sqrt{(n-1)(n+1)}}\cdot \sqrt{n-1}\frac{\bar{\sigma}(X)}{\theta_2}\Bigg)\nonumber\\
& =\theta_2  \sqrt{\frac{n+1}{n}}\,\rho_{\theta}\bigg(Z-\frac{a_n\sqrt{n}}{\sqrt{(n-1)(n+1)}}V_n\bigg),\label{eq:tvar.an2}
\end{align}
where, $Z\sim\cN(0,1)$, $V_n\sim \chi_{n-1}$ and both being independent. Note that the distribution of $(Z,V_n)$ does not depend on $\theta$. Thus, it is enough to show that there exists $b_n\in\bR$ such that
\begin{equation}\label{eq:est.tvar.cond}
\rho_{\theta}\left(Z+b_nV_n\right)=0.
\end{equation}
As $V_{n}$ is non-negative and the risk measure $\rho_{\theta}$ is counter-monotone, we obtain that \eqref{eq:est.tvar.cond} is decreasing with respect to $b_n$. Moreover, $0<\rho_{\theta}(Z)=\rho_{\theta}(Z+0V_n)$. For $b_n$ large enough we get $\rho_{\theta}(Z+b_nV_n)<0$, as $\rho_{\theta}(Z+b_nV_n)=b_n\rho_{\theta}(\frac{Z}{b_n}+V_n)$ and
\[
\rho_{\theta}\Big(\frac{Z}{b_n}+V_n\Big)\xrightarrow{b_n\to\infty} \rho_{\theta}(V_n)< 0,
\]
due to the Lebesgue continuity property of expected shortfall on $L^{1}$ (see \cite[Theorem 4.1]{KaiRus2009}). Thus, again using continuity of $\rho_{\theta}$, we conclude that there exists $b_n\in\bR$ such that \eqref{eq:est.tvar.cond} holds. Moreover, the value of $b_n$ is independent of $\theta$, as the family $(\rho_{\theta})_{\theta\in\Theta}$ is law-invariant (see Equation \eqref{def:R}) and $Z$, $V_{n}$ are  pivotal quantities. Note that we only needed positive homogeneity and monotonicity of $\rho_{\theta}$ as well as \eqref{eq:est.tvar.cond} to show the existence of an unbiased estimator. Moreover, the value of $b_n$ in \eqref{eq:est.tvar.cond}, and consequently $a_n$ in \eqref{eq:est.tvar}, can be computed numerically without effort.
\end{example}


\section{Asymptotically  unbiased estimators}
\label{sec:asymptotics}

Even if the risk estimators from Examples \ref{ex:307}, \ref{ex:kernel}, \ref{ex:plugin}, \ref{ex:student} and \ref{ex:gpd} are biased (cf. Table~\ref{t:test1}), one might still have nice properties in an asymptotic sense which we study in the following. 

\begin{definition}
A sequence of risk estimators $\hat\rho = (\hat \rho_n)_{n \in \N}$ will be called \emph{unbiased} at $n\in\N$, if $\hat \rho_n$ is unbiased. If unbiasedness holds for all $n\in\bN$, we call the sequence $\hat \rho$ unbiased. The sequence $\hat{\rho}$ is called  \emph{asymptotically unbiased}, if 
\[
\rho_\theta(X+\hat{\rho}_{n})\xrightarrow{n\to\infty} 0, \qquad \text{ for all }\theta \in \Theta.
\]
\end{definition}

In many cases the estimators of the distribution are consistent in the sense that $\hat F_X \to F_X(\theta)$. Indeed, the Glivenko-Cantelli theorem gives even uniform convergence over convex sets of the empirical distribution with probability one. Intuitively, if the underlying distribution-based risk measure admit some sort of continuity, then we could expect that the the plug-in estimator satisfies
\[
\hat{\rho}_{n}\xrightarrow{n\to\infty} \rho_{\theta}(X)
\]
almost surely for each $\theta \in \Theta$. 
Consequently, for any $\theta\in\Theta$ we also would get
\[
\rho_{\theta}(X+\hat{\rho}_{n})\xrightarrow{n\to\infty}  \rho_{\theta}(X+\rho_{\theta}(X))=0,
\]
which is exactly the definition of asymptotic  unbiasedness. Let us now present two examples, which show asymptotic unbiasedness of the empirical value-at-risk estimator \eqref{eq:var.hist} and the plug-in Gaussian estimator for expected shortfall.  

\begin{remark}
The proposed definition of asymptotical unbiasedness has similarities to the notion of consistency suggested in \cite{Davis2016}. This notion of consistency requires that averages of the calibration errors converge suitable fast to $0$ when the time period tends to infinity. 
Hence, asymptotically unbiased risk estimators will be consistent when the calibration error is measured with the risk measure itself. On the other side, it should be noted that our main goal is to obtain the optimal risk estimator without averaging out under- or overestimates as they have an asymmetric effect on the portfolio performance. 
\end{remark}

We obtain the following result. Recall that we study an i.i.d.\ sequence $X,X_1,X_2,\dots$ 
Let $\alpha\in (0,1)$ and consider the negative of emprical $\alpha$-quantile
\begin{align}\label{def:empvar}
\hat{\rho}_{n}(x_1,\ldots,x_n)=-x_{(\lfloor n\alpha\rfloor+1)}, \quad n \in \N,
\end{align}
 which we call empirical estimator of value-at-risk at level $\alpha$ (compare also \eqref{eq:var.hist}). By $\hat \rho_n$ we denote the random variable $\hat \rho_n(X_1,\dots,X_n)$.
\begin{proposition}\label{prop:1}
Assume that  $X$ is absolutely continuous under $P_{\theta}$ for any $\theta \in \Theta$. The sequence of empirical estimators of value-at-risk  given in \eqref{def:empvar} is asymptotically unbiased.
\end{proposition}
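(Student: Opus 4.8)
The plan is to show that $\rho_\theta(X + \hat\rho_n) \to 0$ as $n \to \infty$ by unpacking what this means for value-at-risk. Since $X$ is absolutely continuous under $P_\theta$ and $\hat\rho_n = -X_{(\lfloor n\alpha\rfloor + 1)}$ is a function of $X_1, \dots, X_n$ only (hence independent of $X$), unbiasedness of the limit is equivalent to showing that $P_\theta[X + \hat\rho_n < 0] = P_\theta[X < X_{(\lfloor n\alpha\rfloor+1)}] \to \alpha$, by the same elementary argument used in Remark~\ref{rem:probability.bias} (continuity of $F_X(\theta)$ ensures $\inf\{x : P_\theta[X + \hat\rho_n + x < 0] \le \alpha\} = 0$ exactly when that crossing probability equals $\alpha$, and here we only need it in the limit together with continuity of $\rho_\theta$ in an appropriate sense).

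First I would fix $\theta$ and condition on the sample: write $P_\theta[X < X_{(\lfloor n\alpha\rfloor+1)}] = E_\theta\big[F_X(\theta, X_{(\lfloor n\alpha\rfloor+1)})\big]$, using independence of $X$ from the order statistic. Now invoke the classical fact that the order statistic $X_{(\lfloor n\alpha\rfloor+1)}$ is a consistent estimator of the $\alpha$-quantile $q_X(\theta,\alpha) = F_X^{-1}(\theta,\alpha)$: since $\lfloor n\alpha\rfloor/n \to \alpha$, one has $X_{(\lfloor n\alpha\rfloor+1)} \to q_X(\theta,\alpha)$ almost surely (this follows from Glivenko--Cantelli, or directly from the law of large numbers applied to $\frac1n\sum \Ind_{\{X_i \le t\}}$ at points $t$ just below and just above the quantile, using strict monotonicity of $F_X(\theta)$ near $q_X(\theta,\alpha)$ guaranteed by absolute continuity). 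By continuity of $F_X(\theta,\cdot)$ and the continuous mapping theorem, $F_X(\theta, X_{(\lfloor n\alpha\rfloor+1)}) \to F_X(\theta, q_X(\theta,\alpha)) = \alpha$ almost surely. Since these random variables are bounded by $1$, dominated convergence gives $E_\theta[F_X(\theta, X_{(\lfloor n\alpha\rfloor+1)})] \to \alpha$.

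Finally I would translate the crossing-probability convergence back into $\rho_\theta(X + \hat\rho_n) \to 0$. Write $Y_n := X + \hat\rho_n$; we have shown $P_\theta[Y_n < 0] \to \alpha$. I would argue that $\rho_\theta(Y_n) = \inf\{x : P_\theta[Y_n + x < 0] \le \alpha\} \to 0$: the upper bound follows because for any $\varepsilon > 0$, $P_\theta[Y_n + \varepsilon < 0] = P_\theta[X < -\hat\rho_n - \varepsilon] = E_\theta[F_X(\theta, -\hat\rho_n - \varepsilon)]$, which by the same consistency and dominated-convergence argument converges to $F_X(\theta, q_X(\theta,\alpha) - \varepsilon) < \alpha$ (strict, by absolute continuity), so eventually $\rho_\theta(Y_n) \le \varepsilon$; the lower bound is symmetric with $-\varepsilon$. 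Hence $\rho_\theta(Y_n) \to 0$ for every $\theta$, which is asymptotic unbiasedness.

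The main obstacle is the a.s.\ convergence $X_{(\lfloor n\alpha\rfloor+1)} \to q_X(\theta,\alpha)$ and its clean conversion into the quantile-crossing statement; everything else is bookkeeping. Absolute continuity is used precisely here, both to make $F_X(\theta)$ continuous (so that $F_X(\theta, q_X(\theta,\alpha)) = \alpha$ exactly) and to ensure $F_X(\theta)$ is strictly increasing in a neighbourhood of the quantile (so that the perturbed values $q_X(\theta,\alpha) \pm \varepsilon$ genuinely separate the level $\alpha$), which is what lets the $\varepsilon$-sandwich pin down the limit of $\rho_\theta(Y_n)$ to exactly $0$.
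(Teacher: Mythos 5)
Your argument is correct and follows essentially the same route as the paper: almost-sure consistency of the empirical quantile together with continuity of $F_X(\theta,\cdot)$ and an $\varepsilon$-sandwich between $F_X\bigl(\theta,q_X(\theta,\alpha)-\varepsilon\bigr)$ and $F_X\bigl(\theta,q_X(\theta,\alpha)+\varepsilon\bigr)$ gives $P_\theta[X+\hat\rho_n<0]\to\alpha$, exactly as in the paper's proof (which cites the consistency from the literature rather than reproving it). Your explicit conversion of this crossing-probability limit into $\rho_\theta(X+\hat\rho_n)\to 0$ is actually spelled out more carefully than in the paper; only note that the strict inequalities you invoke (e.g.\ $F_X(\theta,q_X(\theta,\alpha)-\varepsilon)<\alpha$ and strict monotonicity near the quantile) follow from the definition of the quantile as $\inf\{t: F_X(\theta,t)\ge\alpha\}$ together with the implicitly assumed uniqueness of the $\alpha$-quantile, not from absolute continuity alone --- an implicit assumption the paper's own argument shares.
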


\begin{proof}
The proof directly follows from asymptotical properties of empirical quantiles. For the readers' convenience we provide a sketch of the proof.  In this regard, for any $\epsilon>0$ and $\theta\in\Theta$, let $A_{n,\epsilon}:=\left\{\, |\hat{\rho}_{n}+F_{X}^{-1}(\theta,\alpha)|\geq\epsilon\right\}$, where $F_{X}^{-1}(\theta,\cdot)$ denotes the inverse of $F_X(\theta,\cdot)$. Then we have that
\begin{align*}
P_{\theta}[X+\hat{\rho}_{n}<0] &\geq P_{\theta}[A^{c}_{n,\epsilon}]F_{X}\big(\theta,F^{-1}_{X}(\theta,\alpha)-\epsilon\big)-P_{\theta}[A_{n,\epsilon}],\\
P_{\theta}[X+\hat{\rho}_{n}<0] &\leq P_{\theta}[A^{c}_{n,\epsilon}]F_{X}\big(\theta,F^{-1}_{X}(\theta,\alpha)+\epsilon\big)+P_{\theta}[A_{n,\epsilon}].
\end{align*}
Using the fact that empirical value-at-risk estimator is consistent \cite[Example 2.10]{ConDegSca2010}, i.e. for any $\theta\in\Theta$, under $P_{\theta}$, we get
\[
\hat{\rho}_{n}\xrightarrow{n\to\infty} -F_{X}^{-1}(\theta,\alpha) \qquad a.s.,
\]
we obtain that $P_{\theta}[A_{n,\epsilon}]\xrightarrow{n\to\infty} 0$, for any $\epsilon>0$ and $\theta\in\Theta$. Consequently,
\[
F_{X}\big(\theta,F^{-1}_{X}(\theta,\alpha)-\epsilon\big) \leq \lim_{n\to\infty}P_{\theta}[X+\hat{\rho}_{n}<0]\leq F_{X}\big(\theta,F^{-1}_{X}(\theta,\alpha)+\epsilon\big),
\]
for any $\epsilon>0$ and $\theta\in\Theta$. Taking the limit, and noting that $F_{X}(\theta,\cdot)$ is continuous, we get
\[
P_{\theta}[X+\hat{\rho}_{n}<0]\xrightarrow{n\to\infty} F_{X}\big(\theta,F^{-1}_{X}(\theta,\alpha)\big)=\alpha,
\]
for any $\theta\in\Theta$, which concludes the proof, due to \eqref{eq:var.2}.
\end{proof}

In a similar fashion, one can prove that estimator given in~\eqref{eq:var.hist} is asymptotically unbiased as well. Moreover, slightly changing the proof of Proposition~\ref{prop:1} one could show that under normality assumption the sequence of classical plug-in Gaussian estimators of value-at-risk given in \eqref{eq:var.norm} is also asymptotically unbiased. See also Remark~\ref{rem:gaussian.bias}. 


In a similar way we obtain asymptotic unbiasedness of the Gaussian plug-in expected shortfall estimator introduced in \eqref{eq:normal.plugin}:
In this regard let $X$ be normally distributed with mean $\theta_1$ and standard deviation $\theta_2$ under $P_{\theta}$, for any $(\theta_1,\theta_2)=\theta\in\Theta=\R\times\R_{>0}$. For a fixed $\alpha\in (0,1)$, let
\[
\rho_{\theta}(X)=E_{\theta}[-X | X\leq q_{X}(\theta,\alpha)],
\]
denote the expected shortfall at level $\alpha$. Following \eqref{eq:normal.plugin}, set
\begin{align}\label{eq:392}
\hat{\rho}_n(x_1,\ldots,x_n)=-\bar{x}+\bar{\sigma}(x)R(\Phi), \quad n \in \N,
\end{align}
where $\Phi$ is a Gaussian distribution and  $R(\Phi)$ is the expected shortfall at level $\alpha$ under $\Phi$. The estimator \eqref{eq:392} corresponds to a standard MLE plug-in estimator, under the assumption that $X$ is normally distributed 
\begin{proposition}\label{prop:cvar}
Assume that  $X,X_1,X_2,\dots$ are i.i.d. $\cN(\theta_1,\theta_2^2)$  for any $\theta \in \Theta$. The sequence of estimators of expected shortfall given in \eqref{eq:392} is asymptotically unbiased.
\end{proposition}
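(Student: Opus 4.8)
The plan is to argue in the same spirit as the proof of Proposition~\ref{prop:1}: first obtain an almost-sure limit of the secured position $X+\hat\rho_n$, and then pass to the limit inside the risk measure. Since $X,X_1,\dots,X_n$ are i.i.d.\ $\cN(\theta_1,\theta_2^2)$ under $P_\theta$ with $X$ independent of the sample, the strong law of large numbers gives $\bar X\to\theta_1$ and $\bar\sigma(X)^2\to\theta_2^2$ $P_\theta$-a.s.\ as $n\to\infty$ (consistency of the sample mean and of the sample variance), hence, by continuity of the square root,
\[
\hat\rho_n=-\bar X+\bar\sigma(X)R(\Phi)\ \xrightarrow{n\to\infty}\ -\theta_1+\theta_2 R(\Phi)\qquad P_\theta\text{-a.s.}
\]
Consequently $X+\hat\rho_n\to Y_\infty:=X-\theta_1+\theta_2 R(\Phi)$ $P_\theta$-a.s., and, since $X-\theta_1\sim\cN(0,\theta_2^2)$ under $P_\theta$, the limit satisfies $Y_\infty\sim\cN(\theta_2R(\Phi),\theta_2^2)$.

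Next I would check that $\rho_\theta(Y_\infty)=0$, which makes the statement a rigorous version of the heuristic $\hat\rho_n\approx\rho_\theta(X)$ underlying asymptotic unbiasedness. Writing $X=\theta_1+\theta_2Z$ with $Z\sim\cN(0,1)$ under $P_\theta$, translation invariance and positive homogeneity of expected shortfall, together with law-invariance, give $\rho_\theta(X)=-\theta_1+\theta_2\rho_\theta(Z)=-\theta_1+\theta_2R(\Phi)$; hence, using translation invariance once more,
\[
\rho_\theta(Y_\infty)=\rho_\theta\bigl(X+(-\theta_1+\theta_2R(\Phi))\bigr)=\rho_\theta(X)-\bigl(-\theta_1+\theta_2R(\Phi)\bigr)=0 .
\]

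It remains to justify the exchange $\rho_\theta(X+\hat\rho_n)\to\rho_\theta(Y_\infty)$, which is the only genuinely technical step and the main obstacle. I would upgrade the almost-sure convergence to convergence in $L^1(P_\theta)$: from $|X+\hat\rho_n|\le|X|+|\bar X|+|R(\Phi)|\,\bar\sigma(X)$ together with $E_\theta[\bar X^2]=\theta_1^2+\theta_2^2/n$ and $E_\theta[\bar\sigma(X)^2]=\theta_2^2$ (the sample variance being unbiased), the sequence $(X+\hat\rho_n)_{n\in\N}$ is bounded in $L^2(P_\theta)$, hence uniformly integrable, so Vitali's convergence theorem yields $X+\hat\rho_n\to Y_\infty$ in $L^1(P_\theta)$. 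Since expected shortfall at level $\alpha$ is $\tfrac{1}{\alpha}$-Lipschitz with respect to the $L^1(P_\theta)$-norm --- being coherent with dual densities bounded by $1/\alpha$, which is exactly the $L^1$ Lebesgue continuity property used in Example~\ref{ex:3} (see \cite[Theorem 4.1]{KaiRus2009}) --- we get
\[
\bigl|\rho_\theta(X+\hat\rho_n)-\rho_\theta(Y_\infty)\bigr|\le\frac{1}{\alpha}\,E_\theta\bigl[\,\bigl|(X+\hat\rho_n)-Y_\infty\bigr|\,\bigr]\ \xrightarrow{n\to\infty}\ 0 .
\]
Combined with $\rho_\theta(Y_\infty)=0$, this gives $\rho_\theta(X+\hat\rho_n)\to0$ for every $\theta\in\Theta$, i.e.\ asymptotic unbiasedness. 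As an alternative to the $L^2$-boundedness estimate one could instead exhibit an explicit $L^1(P_\theta)$-integrable dominating variable for $(X+\hat\rho_n)_{n}$ --- via a maximal inequality for $\sup_n|\bar X|$ and $\sup_n\bar\sigma(X)$ --- and then invoke the Lebesgue continuity property directly, exactly as in the proof of Example~\ref{ex:3}.
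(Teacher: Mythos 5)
Your proof is correct, and its overall architecture matches the paper's: identify the almost-sure limit $X+\hat\rho_n\to X-\theta_1+\theta_2R(\Phi)$, observe via translation invariance, positive homogeneity and law-invariance that this limiting secured position has expected shortfall zero, and then pass to the limit inside $\rho_\theta$ using a continuity property of expected shortfall. Where you differ is in how the limit exchange is justified. The paper invokes the Lebesgue continuity of expected shortfall from \cite[Theorem 4.1]{KaiRus2009}, i.e.\ almost-sure convergence together with domination by an integrable random variable, and leaves the construction of a uniform dominating variable (controlling $\sup_n|\bar X|$ and $\sup_n\bar\sigma(X)$) implicit; you instead upgrade the almost-sure convergence to $L^1(P_\theta)$-convergence via the uniform $L^2$-bound on $X+\hat\rho_n$ (uniform integrability and Vitali) and then use the $\nicefrac{1}{\alpha}$-Lipschitz continuity of $\avar_\alpha$ with respect to the $L^1$-norm coming from its dual representation with densities bounded by $\nicefrac{1}{\alpha}$. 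Your route is slightly longer but fully self-contained on the delicate step, and it avoids exhibiting a dominating variable altogether (your closing remark about a maximal-inequality-based domination is precisely the route the paper's citation presupposes). One small terminological caveat: Lipschitz continuity in $L^1$ and Lebesgue continuity are not the same property, so your parenthetical equating them should be softened — but the property you actually use is the Lipschitz bound, which is valid and suffices. It is also worth a sentence, in either approach, that since $X$ is independent of $\hat\rho_n$ the positions $X+\hat\rho_n$ and the limit are absolutely continuous, so the conditional-expectation form of expected shortfall used in the proposition agrees with the coherent functional $\avar_\alpha$ to which the dual/Lipschitz (or Lebesgue continuity) argument applies.
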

\begin{proof}
First, Theorem 4.1 in \cite{KaiRus2009} shows that the tail-value-at-risk is Lebesgue-continuous, which means that for a sequence $Y_n$ converging to $Y$ almost surely and such that all $Y_n$ are dominated by a random variable being an element of $L^p$, $\lim_{n \to \infty} \rho(Y_n)=\rho(Y). $ Set $Y_n := -\bar{x}+\bar{\sigma}(x)R(\Phi)$, such that $Y_n \to \theta_1+\theta_2 R(\Phi)=:Y$ almost surely as $n\to\infty$. But, it follows directly for the tail-value-at risk under a normal distribution, denoted by $\rho_\theta$, that
$$ \rho_\theta( -\theta_1 + \theta_2 R(\Phi)) =0, $$
hence the claim. 
\end{proof}

\section{Estimating the bias and the relation to regulatory backtesting}\label{sec:backtesting}
An important concept in risk management is backtesting. Basically, backtesting procedures empirically asses the fit of the model to data measured in quantities relating to the underyling risk. Before we provide a brief comment on the relation between bias and regulatory backtesting (the detailed analysis of backtesting framework is beyond the scope of this paper) we introduce a simple procedure that could be used to measure the estimator bias for i.i.d.data.

Let us assume that we have a sample $(x_i)_{i=1,\ldots,I}$ at hand and for each element we are given the value of the estimated capital reserve denoted by $\hat\rho_i$\,.\footnote{For daily time-series analysis, the value $\hat\rho_i$ could be obtained using a simple rolling-window procedure, i.e. assuming we are also given observations before day $i$, for any given day we use past $n$ days to estimate the risk.} Then, the sample $(y_i)$ given by
\begin{equation}\label{eq:bias.ts}
y_i =x_i+\hat\rho_i, \quad i=1,\ldots,I
\end{equation}
represents secured positions and represents $X+\hat\rho$. A natural suggestion for measuring the bias of the position is to replace $\rho_\theta$ in Definition \ref{def:unbiased} by its empirical counterpart $\hat\rho_{\textrm{emp}}$. If $I$ is big enough, the  empirical estimator is expected to produce reliable results. In this light, the measure 
\begin{equation}\label{eq:bias.est}
\hat Z:=\hat\rho_{\textrm{emp}}(y_1,\ldots,y_I)
\end{equation}
is a possible quantity to assess the bias of the risk estimator. If the estimator is unbiased $\hat Z$ will be close to zero. Underestimation of risk is in turn  reflected by positive values of $\hat Z$, highlighting that the position $\hat Z$ needs additional capital to become acceptable.

Assuming that $I=250$ and the risk level is equal to $2\%$,  the empirical $\var_{0.02}$ estimator (plugged in~\eqref{eq:bias.est}) would simply compute the (negative value of the) 5-th worst outcome of the sample $(y_i)$. As an alternative, one might calculate the number of observations smaller than zero and check if their size is acceptable (i.e. close to $5$). This will relate to the standard exceedance rate test; cf. Example~\ref{ex:backt-VaR}. On the other hand, the empirical $\textrm{ES}_{0.02}$ estimator would compute the mean value for the five worst-case observations; here we can also check how many worst-case observations are needed for their mean to be positive.

As we will illustrate in the following example, at least for the value-at-risk, there is a tight connection between the regulatory backtesting framework and unbiasedness.

\begin{example}[Backtesting Value-at-Risk]\label{ex:backt-VaR}
The standard (regulatory) backtesting framework for $\var$ is based on the exceedance rate procedure; see e.g. \cite{GioLau2003} and \citepalias{Bas1996}. More precisely, one compares the average rate of exceedances over the estimated $\var$ to the expected exceedance rate, $\alpha$. The link to unbiasedness arises as follows: in the i.i.d.~case, the exception rate of $X+\hat\rho(X_1,\ldots,X_n)$ converges to the probability of the scenario in which the secured position is negative, given by
\[
\bP_{\theta}[X+\hat\rho(X_1,\ldots,X_n)<0],
\]
where $\theta$ is the unknown true parameter. On the other hand, in Remark~\ref{rem:probability.bias} we have shown that estimator $\hat\rho$ is unbiased if and only if Equality \eqref{eq:var.bt} is true for any $\theta\in\Theta$, i.e.
\[
\bP_{\theta}[X+\hat\rho_n<0]=\alpha.
\] 
Choosing the estimator in such away that the exceedance rate is close to the level $\alpha$ (done via the backtesting procedure) ensures therefore that the estimation procedure is unbiased, at least in an asymptotic sense, compare also Remark \ref{rem:level_adjustment}. 
\end{example}

\begin{remark}[On conservativeness of risk estimation]\label{rem:reg.conservative}
The regulatory $\var$ backtest classifies an estimation procedure as not appropriate if the exception-rate is {higher} than a pre-specified threshold; see \citepalias{Bas1996}. 
Consequently, such tests focus on model \emph{conservativeness} instead of model fit. In our context, rather than requiring equality in~\eqref{eq:rho.unbiased} one is interested in the property
\[
\rho_\theta(X+\hat \rho_n)\leq0,\quad \textrm{for all }\theta\in\Theta.
\]
Of course, in addition the estimation procedure is thoroughly analysed by the regulator and both, an acceptable analysis together with passing the backtest are necessary. Passing the backtest is therefore an important feature from a practical viewpoint. However, conclusions about the (overall) performance of estimators solely based on acceptable exceedance rates have to be taken with care; see the following section for additional details.
\end{remark}}

\subsection{Consistent backtesting and elicitability}\label{sec:elicitabilty}
The remarkable article \cite{Gneiting2011} critically reviews the evaluation of point forecasts. He points out that a good performance in backtesting might not necessarily imply that a given estimator is good.

\begin{example}[Perfect backtesting performance]\label{ex:perfectbacktest}
A simple, but illustrative example on the weaknesses of the exceeding rate as measurement of the quality of an estimator is as follows\footnote{This example was suggested in \cite{Holzmann2014}.}. Consider as above $I=250$ and assume that we know that the sample $(x_i)_{i=1,\dots,I}$ is centred and has support $[-1,1]$. Then, choosing 245 times the value $1$ and five times the value $-1$ gives a perfect backtesting performance when measured only by the exceedance rate. A more elaborate example is discussed in Section 1.2 of \cite{Gneiting2011}, which highlights that the estimated target needs to be related to the performance measure, which leads to the concept of \emph{elicitability}.  
\end{example}

The concept of elicitability itself origins in \cite{osband1985information}, who consider the case where a principal is contracting with a firm having superior information on future gains. The contract involves an elicitation procedure in which the firm reports cost estimates which are verified ex post. The goal of the approach is to provide a methodology which ensures truthful reporting, see also \cite{Davis2016} for an interesting discussion.

For a formal definition we follow \cite{Gneiting2011}, while introducing elicitability directly in the setting of law-invariant risk measures specified in Section \ref{S:EstimationRisk}. Recall that we consider a family of distributions $F_X(\theta),\ \theta \in \Theta$ and that a law-invariant risk measure is a function $R: \cD \to \R \cup \{+\infty\}$ mapping cumulative distribution functions to real numbers (or $+\infty$).

A \emph{scoring function} is simply a mapping $S:(\R\cup +\infty)^2 \to \R_{\ge 0}$ which compares two values of risk measures: $S(x,y)$ measures the deviation from the forecast $x$ to the realization $y$; the squared error $S(x,y)=(x-y)^2$ being a standard example.
The scoring function $S$ is called \emph{consistent} for $R$ relative to the class $\{F_X(\theta):\theta \in \Theta\},$ if 
\begin{align} \label{eq:consistent}
	E_\theta [ S(R(F_X(\theta)),Y)] \le E_\theta[ S(r,Y)]
\end{align}
for all $\theta \in \Theta$ and all $r \in \R\cup+\infty$; here $E_\theta$ denotes the expectation under which the random variable $Y$ has distribution $F_X(\theta)$. The scoring function is called \emph{strictly consistent} if it is consistent and equality in \eqref{eq:consistent} implies that $r=R(F_X(\theta))$. For example, the squared error is strictly consistent relative to the class of probability measures of finite second moment.

The risk measure $R$ is called \emph{elicitable} relative to $\{F_X(\theta):\theta \in \Theta\},$ if there exists a scoring function that is strictly consistent.
The prime example in our context is  $\var_\alpha$ (Value-at-Risk at level $\alpha$), which is elicitable with respect to the class of probability measures with finite first moment. A possible specification of a scoring function is given by
  \begin{equation}\label{eq:scor.cons.var}
 S(x,y)=(\1_{\{x\geq y\}}-\alpha)(x-y),
 \end{equation}
see Section 3.3. in \cite{Gneiting2011}. Evaluating with the performance criterion 
\begin{align}\label{eq:barS} \bar S = \frac 1 n \sum_{i=1}^n S(x_i,y_i),
\end{align}
denoting by $x_1,\dots,x_n$ the forecasts and by $y_1,\dots,y_n$ the verifying observations, guarantees that the optimal point forecast, which is the dual of the consistency property, outperforms all other estimators. This in turn  allows to identify flawed estimators like in Example \ref{ex:perfectbacktest}: indeed, in contrast to the exceedance rate, \eqref{eq:barS} involves also the distance from the estimator $x$ to the realization $y$ such that the difficulties pointed out in the example are solved by this test.

\begin{remark}[Application to backtesting]
In the context of the following empirical study (see Section ~\ref{sec:nonstandardbacktesting}), the role of the forecasts $x$ will be taken by the considered risk estimators (i.e. $-\hat\rho$) and $y$ will be the realized cash-flows.  Then, the scoring function from Equation \eqref{eq:scor.cons.var} equals 
\begin{equation}\label{eq:weighted.score}
S(-\hat\rho,X)=\alpha(X+\hat\rho)^+ + (1-\alpha)(X+\hat\rho)^-,
\end{equation}
where $\xi^+$ and $\xi^-$ denote the positive and negative part of a generic $\xi$, respectively.
 One can see that this procedure corresponds to a  weighted penalty scheme: if the secured position is positive the weight $\alpha$ is applied, while for negative secured positions the weight $(1-\alpha)$ is used. Even if being motivated by the above reasoning, this approach has  no direct link to current schemes of regulatory backtesting. One should also note that this procedure penalises estimators which are over-conservative; see Remark~\ref{rem:reg.conservative}.
\end{remark}

For the expected shortfall, the situation is more complex, as expected shortfall is itself not elicitable. However, it is jointly elicitable with value-at-risk, as recently shown in in~\cite{Fissler2016} pointing towards appropriate backtesting procedures. 
We apply these methodologies
in Section \ref{sec:nonstandardbacktesting} to our unbiased estimation procedure. 

\section{Empirical study}\label{sec:empirics}
It is the aim of this section to analyse the performance of selected estimators on various sets of real market data (Market) as well as on simulated data (Simulated). We also want to check if the statement made in Section~\ref{sec:backtesting} (about connections between unbiasedness and backtesting) is supported by the numerical analysis. Our focus is on the practically most relevant risk measures,  $\var$ and ES.

The  market data we use are returns from the data library \cite{FamFre2015}, containing returns of 25 portfolios formed on book-to-market and operating profitability in the period from 27.01.2005 to 01.01.2015. We obtain exactly $2500$ observations for each portfolio.
The sample is split into 50 separate subsets, each consisting of  50 consecutive trading days. For $k=1,2,\ldots,49$, we estimate the risk measure using the $k$-th subset and test it's adequacy on $(k+1)$-th subset; see Sections~\ref{sec5.2} and \ref{sec5.3} for  details.

While the data sample could have possible dependence and heavy tails, we consider in addition a simulation where this is not the case. The simulation allows to clearly quantify the improvement due to the bias correction. It uses  i.i.d.\ normally distributed random variables  whose mean and variance was fitted to each of the 25 portfolios. The sample size was set to 2500 for each set of parameters. 



\subsection{Regulatory backtesting Value-at-Risk}\label{sec5.2}
We begin with an evaluation of the estimators using a backtesting procedure called 
 \emph{exception-rate backtesting}, which is currently standard in the industry. However, as already detailed in Section \ref{sec:elicitabilty}, these results need to be substantiated with further evidence, which we provide in the following Section \ref{sec:nonstandardbacktesting}.

For the value-at-risk we performed the analysis on the  unbiased estimator  $\hat{\var}_{\alpha}^{\textrm{u}}$, 
the empirical sample quantile $\hat{\var}_{\alpha}^{\textrm{emp}}$, the  modified Cornish-Fisher estimator $\hat{\var}_{\alpha}^{\textrm{CF}}$ and the classical Gaussian estimator $\hat{\var}_{\alpha}^{\textrm{norm}}$   
defined in Equations \eqref{eq:var.bia}--\eqref{eq:var.norm} as well as for the GPD plug-in estimator $\hat{\var}_{\alpha}^{\textrm{GPD}}$ given in Equation~\eqref{eq:var.gpd}\footnote{For each portfolio, we set the threshold value $u$ to match the $0.7$-empirical quantile of the corresponding sample.}. As the results for all 25 portfolios were very similar, we present the results for the first five  portfolios.

The results for the exceedance rate test are reported in Table~\ref{t:boot0}, both for  market data and for the simulated data. 
They show that the unbiased estimator has in most cases a lower rate of exceedances. In particular, in the simulated Gaussian data, the  exceedance rate of the biased estimators is significantly higher than the expected rate of $0.05$ while the unbiased estimators performs very well. 

This supports the claim made in Section~\ref{sec:backtesting}. Moreover, using notation from Example~\ref{ex:2}, for the standard Gaussian estimator and under the normality assumption we know that 
\[
\bP_{\theta}[X+\hat{\var}_{\alpha}^{\textrm{emp}}<0]=\bP_{\theta}[T<\sqrt{\tfrac{n}{n+1}}\Phi^{-1}(\alpha)] >\alpha,
\]
for any $\theta\in \Theta$. Consequently, for the standard Gaussian estimator, the exception rate test should systematically produce values bigger than $\alpha$ which is indeed the case; see~Table~\ref{t:boot0}. 
Estimating the bias as laid out in Section \ref{sec:backtesting} confirms this finding and the results are not reported here. 

\begin{table}[ht]
\caption{On top, we show the results for the first five out of  25 portfolios formed on book-to-market and operating profitability in the period from 27.01.2005 to 01.01.2015 from the Fama \& French dataset, see \cite{FamFre2015} (Market). Below we show the results on simulated Gaussian data (Simulated) with mean and variance fitted to the Fama \& French portfolios.  The results for the remaining 20 portfolios show a similar behaviour and are available on request. We perform the standard backtest, splitting the sample into intervals of length 50. The table presents the average rate of exception for Value-at-Risk at level 5\% for  the  unbiased estimator  $\hat{\var}_{\alpha}^{\textrm{u}}$, 
the empirical sample quantile $\hat{\var}_{\alpha}^{\textrm{emp}}$, the  modified Cornish-Fisher estimator $\hat{\var}_{\alpha}^{\textrm{CF}}$, the GPD estimator $\hat{\var}_{\alpha}^{\textrm{GPD}}$ and the classical Gaussian estimator $\hat{\var}_{\alpha}^{\textrm{norm}}$ .   The column labelled mean shows  the mean  of all numbers in a given column, where we would expect $0.05$ if the estimator performs correctly. For the Gaussian data,  the average rate of the biased estimators is significantly higher than the expected rate of $0.05$ while the unbiased estimators perform very well (in bold type).}
\centering
\begin{tabular}{lrrrrr}\toprule
\multicolumn{1}{p{2.3cm}}{Type of data:}  & \multicolumn{5}{c}{MARKET} \\ \midrule
Portfolio & \multicolumn{5}{c}{Estimator type} \\[2mm]
 & $\hat{\var}_{\alpha}^{\textrm{emp}}$ & $\hat{\var}_{\alpha}^{\textrm{norm}}$ & $\hat{\var}_{\alpha}^{\textrm{CF}}$ & $\hat{\var}_{\alpha}^{\textrm{GPD}}$ & $\hat{\var}_{\alpha}^{\textrm{u}}$  \\ 
  \midrule
LoBM.LoOP & 0.071 & 0.073 & 0.067 & 0.067 & 0.069    \\ 
  BM1.OP2 & 0.076 & 0.070 & 0.069 & 0.069 & 0.065    \\ 
  BM1.OP3 & 0.071 & 0.064 & 0.063 & 0.064 & 0.061    \\ 
  BM1.OP4 & 0.069 & 0.071 & 0.067 & 0.067 & 0.068    \\ 
  LoBM.HiOP & 0.071 & 0.071 & 0.070 & 0.067 & 0.068 \\ 
$\cdots$ & $\cdots$ & $\cdots$ & $\cdots$ & $\cdots$   \\
  \midrule
  mean & 0.073 & 0.071 & 0.068 & 0.067 & 0.067   \\ 
  \midrule \\[-5.5mm] \midrule
\multicolumn{1}{p{2.3cm}}{Type of data:}  & \multicolumn{5}{c}{SIMULATED} \\ \midrule
 & $\hat{\var}_{\alpha}^{\textrm{emp}}$ & $\hat{\var}_{\alpha}^{\textrm{norm}}$ & $\hat{\var}_{\alpha}^{\textrm{CF}}$ & $\hat{\var}_{\alpha}^{\textrm{GPD}}$ & $\hat{\var}_{\alpha}^{\textrm{u}}$  \\ 
  \midrule
LoBM.LoOP   &  0.065 & 0.057 & 0.055 & 0.056 & {\bf 0.051}  \\ 
  BM1.OP2   &  0.064 & 0.053 & 0.053 & 0.053 & {\bf 0.050} \\ 
  BM1.OP3   &  0.069 & 0.058 & 0.058 & 0.060 & {\bf 0.052} \\ 
  BM1.OP4   &  0.069 & 0.057 & 0.058 & 0.062 & {\bf 0.053} \\ 
  LoBM.HiOP &  0.060 & 0.054 & 0.053 & 0.056 & {\bf 0.047}\\
$\cdots$ & $\cdots$ & $\cdots$ & $\cdots$ & $\cdots$   \\
  \midrule
  mean &  0.066 & 0.057 & 0.057 & 0.058 & {\bf 0.051}   \\ 
\bottomrule
   \end{tabular}
  \label{t:boot0}
\end{table}



To gain further insight on the performance of the Gaussian unbiased estimator, we have replicated  the results from the simulations in Table~\ref{t:boot0}  for $N=10.000$ times and the first portfolio LoBM.LoOP.  We consider three statistics based on exception rate: for any considered estimator $\hat\rho$ and sample $i \in \{1,\dots,N\}$ we consider first the \emph{exceedance rate} $\textrm{ER}_i(\hat\rho)$, second the 
\emph{relative deviation }
\[
\textrm{RD}_i(\hat \rho):=\frac{\textrm{ER}_i(\hat\rho)-\textrm{ER}_i(\hat{\var}_{\alpha}^{\textrm{u}})}{\textrm{ER}_i(\hat{\var}_{\alpha}^{\textrm{u}})}
\]
and, third, the  \emph{outperformance rate} of the unbiased estimator in the sense that the exceedance rate is closer to $\alpha=0.05$,
\begin{equation}\label{eq:OR}
\textrm{OR}_i(\hat \rho):=
\begin{cases}
1 & \textrm{if } |\textrm{ER}_i(\hat\rho)-\alpha|>|\textrm{ER}_i(\hat{\var}_{\alpha}^{\textrm{u}})-\alpha|,\\
0 & \textrm{otherwise}.
\end{cases}
\end{equation}

In Table~\ref{t:gaussian2} we state mean and standard deviations (sd) of these statistics. It clearly shows that the competing estimators underestimate the risk systematically and exceed the targeted level on average up to 29.2\% more times than the unbiased estimator. Moreover, as could be seen from values of the outperformance rate $\textrm{OR}$, the 
 exception rate fit
of the Gaussian unbiased estimator is better in almost all cases.


\begin{table}[ht]
\caption{We fit a normal distribution to the first portfolio from the Fama \& French dataset, i.e.  LoBM.LoOP portfolio, compare Table \ref{t:boot0}. From this distributions\ we simulate 10.000  samples of size $n=2500$  and perform the standard backtest 10.000 times.
The table presents the average exception (ER) rate for Value-at-Risk at level $\alpha=5\%$. It can be seen that for the biased estimators, the average mean exception rate is significantly higher than the expected rate of $0.05$ while the Gaussian unbiased estimator perform very well. The relative deviation (RD) shows that the exceedance rate for Gaussian unbiased estimator is usually lower in comparison with other estimators, eliminating the effect of risk underestimation. The outperformance rate (OR)  shows that in almost all cases, the exception rate for Gaussian unbiased estimator is closer to 0.05 than the exception rate of any other of the considered estimators.  }
\centering
\begin{tabular}{ll*{3}{c}cc}\toprule  
 Estimator & &  \multicolumn{2}{c}{$\textrm{ER}$} & \multicolumn{2}{c}{$\textrm{RD}$} & $\textrm{OR}$ 
 \\
 &&  mean & sd &  mean & sd & mean 
 \\
 \midrule
Percentile & $\hat{\var}_{\alpha}^{\textrm{emp}}(x)$ & 0.067 & 0.004 & 29.2\% & 8.9\% & 100\%  
\\ 
  Modified C-F & $\hat{\var}_{\alpha}^{\textrm{CF}}(x)$& 0.057 & 0.003 & 11.2\% & 5.0\% & 91.7\% 
\\ 
  Gaussian & $\hat{\var}_{\alpha}^{\textrm{norm}}(x)$ & 0.057 & 0.004 & 9.8\% & 3.0\% & 88.2\% 
\\ 
  GPD & $\hat{\var}_{\alpha}^{\textrm{GPD}}(x)$ & 0.058 & 0.003 & 12.5\% & 6.4\% & 93.3\% 
\\ 
  Gaussian unbiased & $\hat{\var}_{\alpha}^{\textrm{u}}(x)$& 0.052 & 0.003 & - & - & - 
\\ 
\bottomrule
\end{tabular}
\label{t:gaussian2}
\end{table}


\subsection{Backtesting Expected Shortfall}\label{sec5.3}
In this example we will use the same dataset, but instead of $\var$ at level $5\%$ we consider ES at level $10\%$. Following the notation in Equations~\eqref{eq:var.bia}--\eqref{eq:var.norm} we obtain the estimators 

\begin{align}
\hat{\textrm{ES}}_{\alpha}^{\textrm{emp}}(x) &:=-\left(\frac{ \sum_{i=1}^{n}x_i\1_{\{x_i+\hat{\var}_{\alpha}^{\textrm{emp}}(x)<0\}}}{\sum_{i=1}^{n}\1_{\{x_i+\hat{\var}_{\alpha}^{\textrm{emp}}(x)<0\}}} \right),\label{eq:es.hist}\\
\hat{\textrm{ES}}_{\alpha}^{\textrm{CF}}(x) & :=-\left(\bar{x} +\bar{\sigma}(x)C(\bar{Z}^{\alpha}_{CF}(x)) \right),\label{eq:es.mod}\\
\hat{\textrm{ES}}_{\alpha}^{\textrm{norm}}(x) & :=-\left(\bar{x}+\bar{\sigma}(x)\frac{\phi(\Phi^{-1}(\alpha))}{1-\alpha}\right),\label{eq:es.norm}  \\
\hat{\textrm{ES}}_{\alpha}^{\textrm{GPD}}(x) & := \frac{\hat{\var}_{\alpha}^{\textrm{emp}}(x)}{1-\hat\xi} +\frac{\hat\beta-\hat\xi u}{1-\hat\xi}\label{eq:es.gpd},
\end{align}
where $\Phi$ and $\phi$ denotes the cumulative distribution function and the density function of the standard normal distribution, $C(\bar{Z}^{\alpha}_{CF}(x))$ is a standard Cornish-Fisher lower $\alpha$-tail estimator (see~\cite[Equation~(18)]{BouPerCro2008} for details) and {$(u,\hat\beta,\hat\xi)$ is a set of parameters from GPD estimation\footnote{As before, for each portfolio we set the threshold value $u$ to match the $0.7$-empirical quantile of the corresponding sample.} (see Example~\ref{ex:gpd} for details)}. We refer to~\cite{MFE} and \cite{Car2009} for more details and derivation of estimators given in~\eqref{eq:es.hist}--\eqref{eq:es.norm}. Moreover, following Example~\ref{ex:3} we introduce the Gaussian unbiased Expected Shortfall estimator letting
\begin{equation}\label{eq:es.bia}
\hat{\textrm{ES}}_{\alpha}^{\textrm{u}}(x)  :=-\left(\bar{x}-\bar{\sigma}(x)a_n\right),
\end{equation}
where $a_n=b_n\sqrt{\nicefrac{(n-1)(n+1)}{n}}$ and $b_n$ is given as a solution of Equation~\eqref{eq:est.tvar.cond}. By doing simple calculations, for $\alpha=10\%$ and $n=50$ we get the approximate value of $a_n$ equal to $-1.81033$.\footnote{We have used a sample of size $10.000.000$ to approximate this value.} Note that the non-elicitability of ES is directly reflected in \eqref{eq:es.hist} and \eqref{eq:es.gpd}. The joint elicitability of ES together with $\var$ is also visible: the estimator for the $\textrm{ES}$ also makes use of an estimator for $\var$.\footnote{An illustrative and self-explanatory example of this phenomena is joint elicitability of mean and variance: the mean estimator is usually  embedded in the variance estimator; see~\cite{LamPenSho2008} for details.}

For the backtest we follow  {\it Test 2} suggested in~\cite{Acerbi2014Risk}. We utilize the $50$ separate subsets of our data denoted by $(x^k_1,\dots,x^k_{50})$, for $k=1,2,\ldots,50$. 
We consider one of the above estimators of expected shortfall and denote generically by $\hat{\textrm{ES}}^{i}_{\alpha}$ the estimation resulting from the $k$-th subset of our data and by $\hat{\var}_{\alpha}^{i}$ the associated $\var_{\alpha}$ estimator obtained using $k$-th sample, both with level $\alpha$. The test statistic for the backtest is given by
\begin{align}\label{Z}
Z :=\frac{1}{49}\sum_{k=1}^{49}\left(\frac{1}{50}\sum_{j=1}^{50}\frac{x^{k+1}_j\, \ind{x^{k+1}_j+\hat \var^{k}_{\alpha}<0 }}{\alpha \,\hat{\textrm{ES}}^{k}_{\alpha}}\right)+1,
\end{align}
see Equation (6) in \cite{Acerbi2014Risk}. The null-hypothesis that the tail distributions coincide below the $\alpha$-quantile corresponds to a vanishing expectation of $Z$. The alternative that expected shortfall or value-at-risk are underestimated corresponds to negative values of the test statistic $Z$.

The results of our backtest are presented in Table~\ref{t:boot2}. An estimation of the bias as outlined in Section \ref{sec:backtesting} leads to quite similar results and is not presented here. The unbiased estimator clearly outperforms the biased estimators, both on the market data and the simulated data.

\begin{remark}
The at first sight surprising  performance of the unbiased estimator can be traced back to the following heuristics: the test statistic introduced in~\eqref{Z} is an empirical counterpart of the value
\begin{align}\label{temp856}
\bE_{\theta}\left[\frac{X \ind{X+\var_{\alpha}<0}}{\alpha \hat{\textrm{ES}}_{\alpha}}+1\right],
\end{align}
following similar arguments as laid out in Section \ref{sec:backtesting}.
If we assume continuity of the underlying distribution, we obtain
\[
\eqref{temp856} =\bE_{\theta}\left[\frac{X}{\hat{\textrm{ES}}_{\alpha}}+1\,|\, X+\var_{\alpha}<0 \right]=-\rho_{\theta}\left({\frac{X}{\hat{\textrm{ES}}}_{\alpha}}+1\right)=-\rho_{\theta}\left({\frac{X+\hat{\textrm{ES}}_{\alpha}}{\hat{\textrm{ES}}_{\alpha}}}\right),
\]
where $\rho_{\theta}$ denotes $\textrm{ES}_{\alpha}$ under the unknown true parameter $\theta$.
Recall that $\rho_{\theta}(X+\hat{\textrm{ES}}_{\alpha})=0$ for the unbiased estimator. If the multiplication by $\hat{\textrm{ES}}_{\alpha}^{-1}$ is compatible with positive homogeneity \footnote{Or, similarily, assuming that $\hat{\textrm{ES}}_{\alpha}$ is close to a constant and positve.}, this suggests that also \eqref{temp856} should be close to zero. 
\end{remark}

To further substantiate this,  we have replicated the results from the simulations in Table~\ref{t:boot2}  for the empirical estimator \eqref{eq:es.hist}, standard Gaussian estimator \eqref{eq:es.hist}, and Gaussian unbiased estimator \eqref{eq:es.bia} applied to the first portfolio LoBM.LoOP for $N=10.000$ times. The obtained mean values of the statistic (with standard errors in parenthesis) are -0.179 (0.040), -0.104 (0.042), and -0.031 (0.042), respectively. Also, we can consider the analogue of {\it outperformance rate} statistic given in \eqref{eq:OR} for expected shortfall, with $0$ as the reference value. In 95.4\% of cases, the value of test statistics for the Gaussian unbiased estimator of ES was closer to zero compared to the value of the test statistics for the standard Gaussian ES estimator. For the empirical estimator, the outperformance rate was equal to 100\%. This shows that the results from Table~\ref{t:boot2} are statistically significant.

\begin{table}[ht]
\caption{We run the backtest on the same data taken for Table \ref{t:boot0}. The sample is split into intervals of length 50 and on each subset the estimation is performed. The table presents the value of the backtesting statistic $Z$ given in Equation \eqref{Z} at level $\alpha=10\%$ for 
the empirical sample quantile ${\textrm{ES}}_{\alpha}^{\textrm{emp}}$, the  modified Cornish-Fisher estimator ${\textrm{ES}}_{\alpha}^{\textrm{CF}}$ the classical Gaussian estimator ${\textrm{ES}}_{\alpha}^{\textrm{norm}}$, the GPD estimator ${\textrm{ES}}_{\alpha}^{\textrm{GPD}}$, and the  unbiased estimator  ${\textrm{ES}}_{\alpha}^{\textrm{u}}$.
The row labelled "mean" shows  the mean of all  numbers (25 - here we include all portfolios) in a given column, where we would expect $0$ if the estimator performs correctly. Negative values of $Z$ correspond to underestimation of risk. For the Gaussian data,  the average value of $Z$  for the biased estimators is significantly lower than for the unbiased estimators, suggesting a good performance of the latter ones (values of the unbiased estimators in bold type).}
\centering
\begin{tabular}{lrrrrr}\toprule
\multicolumn{1}{p{2.3cm}}{Type of data:}  & \multicolumn{5}{c}{MARKET} \\ \midrule
Portfolio & \multicolumn{5}{c}{Estimator type} \\[2mm]
 & ${\textrm{ES}}_{\alpha}^{\textrm{emp}}$ & ${\textrm{ES}}_{\alpha}^{\textrm{norm}}$ & ${\textrm{ES}}_{\alpha}^{\textrm{CF}}$ & ${\textrm{ES}}_{\alpha}^{\textrm{GPD}}$ & ${\textrm{ES}}_{\alpha}^{\textrm{u}}$  \\ 
  \midrule
 LoBM.LoOP &  -0.357 & -0.393 & -0.325 & -0.302 & -0.331    \\ 
   BM1.OP2 &  -0.428 & -0.303 & -0.338 & -0.335 & -0.235    \\ 
   BM1.OP3 &  -0.327 & -0.322 & -0.336 & -0.295 & -0.254    \\ 
   BM1.OP4 &  -0.326 & -0.354 & -0.348 & -0.282 & -0.272    \\ 
 LoBM.HiOP &  -0.424 & -0.421 & -0.371 & -0.335 & -0.331    \\ 
 $\cdots$ & $\cdots$ & $\cdots$ & $\cdots$ & $\cdots$   \\
  \midrule
  mean & -0.374 & -0.363 & -0.339 &  -0.308 & -0.290  \\ 
  \midrule \\[-5.5mm] \midrule
\multicolumn{1}{p{2.3cm}}{Type of data:}  & \multicolumn{5}{c}{SIMULATED} \\ \midrule
 & ${\textrm{ES}}_{\alpha}^{\textrm{emp}}$ & ${\textrm{ES}}_{\alpha}^{\textrm{norm}}$ & ${\textrm{ES}}_{\alpha}^{\textrm{CF}}$ & ${\textrm{ES}}_{\alpha}^{\textrm{GPD}}$ & ${\textrm{ES}}_{\alpha}^{\textrm{u}}$  \\ 
  \midrule
LoBM.LoOP   & -0.177 & -0.073 & -0.077 & -0.104 & {\bf -0.005}  \\ 
  BM1.OP2   & -0.143 & -0.083 & -0.069 & -0.074 & {\bf -0.014}  \\ 
  BM1.OP3   & -0.220 & -0.084 & -0.100 & -0.157 & {\bf -0.019}  \\ 
  BM1.OP4   & -0.224 & -0.086 & -0.101 & -0.150 & {\bf -0.012}  \\ 
  LoBM.HiOP & -0.183 & -0.082 & -0.072 & -0.098 & {\bf -0.016}  \\
  $\cdots$ & $\cdots$ & $\cdots$ & $\cdots$ & $\cdots$   \\
  \midrule
  mean & -0.174 & -0.101 & -0.103 & -0.109 & {\bf -0.030}  \\ 
\bottomrule
   \end{tabular}
  \label{t:boot2}
\end{table}

\vspace{2mm}

We want to emphasize that in this small empirical study we only considered $\var$ estimators defined in Equations \eqref{eq:var.bia}-\eqref{eq:var.norm} and \eqref{eq:var.gpd} as well as their equivalents for ES. The results presented  could possibly be improved using estimation combined with e.g. GARCH filtering, as is sometimes done in practice \cite{SoYu2006,BerObr2002}. Furthermore, other risk measures  such as the median shortfall (median tail loss) \cite{KouPenHey2013} and other  backtesting procedures could be considered  (see e.g.~\cite{Acerbi2014Risk,Fissler2016}). This, however, is beyond the scope of this paper.

\subsection{Estimator performance and consistent scoring functions}\label{sec:nonstandardbacktesting}

In this section we take up the critical remarks from \cite{Gneiting2011} on backtesting procedures as discussed in Section \ref{sec:elicitabilty} and confirm the results from the previous section with theoretically substantiated 
procedures based on consistent scoring functions. While in the previous section we were interested in the estimation process conservativeness, here we compare the estimators and check their fit by exploiting the concept of {\it forecast dominance} and {\it comperativeness}; see Section 2.3 of \cite{NolZie2017}.

We utilise the data and the framework from the previous section: Let us assume we are given a dataset of length 2500. For $k\in\{1,\ldots,50\}$, let $(x^{k}_1,\dots,x^{k}_{50})$ denote the $k$-th subset of the data and let $\hat\rho^{k}$ denote the value of the associated $\hat\rho$ estimator (i.e. $\var_{\alpha}$ or $\textrm{ES}_{\alpha}$ estimator) obtained using the $k$-th subset. Given a scoring function $S$ we define the mean score statistic
\begin{equation}\label{eq:mean.score1}
\bar S(\hat\rho)= \frac{1}{49}\sum_{k=1}^{49}\left(\frac{1}{50}\sum_{j=1}^{50}S(-\hat\rho^{k},x^{k+1}_j)\right).
\end{equation}
If the scoring function is consistent, the mean score value $\bar S$ could be treated as a performance criterion -- the smaller the value, the better the estimator. In particular, given two competing estimators one could compare their mean score to determine which one is better; see \cite{NolZie2017}.
For $\var$ performance test, following \cite{Fissler2016}, we use the (consistent) scoring function given in~\eqref{eq:scor.cons.var}. The results for market and simulated data are presented in Table~\ref{t:score1}. 

\begin{table}[ht]
\caption{We run the scoring performance test on the same data taken for Table \ref{t:boot0}. The sample is split into intervals of length 50 and on each subset the estimation is performed. The table presents the value of the mean score $\bar S$ given in Equation \eqref{eq:mean.score1}, multiplied by $1,000$, for 
the empirical sample quantile $\hat{\var}_{\alpha}^{\textrm{emp}}$, the  modified Cornish-Fisher estimator $\hat{\var}_{\alpha}^{\textrm{CF}}$, the GPD estimator $\hat{\var}_{\alpha}^{\textrm{GPD}}$ and the classical Gaussian estimator $\hat{\var}_{\alpha}^{\textrm{norm}}$.
The row labelled "mean" shows  the mean of all  numbers (25 - here we include all portfolios) in a given column. 
On the simulated data, the Cornish-Fisher and the GPD estimator show a lower statistic than the unbiased estimator which is due to the normality assumption in $\hat{\var}_{\alpha}^{\textrm{u}}$. 
For the Gaussian data, in most of the cases, the values of $\bar S$ for the biased estimators are higher than for the unbiased estimators, suggesting a good performance of the latter ones (values of the unbiased estimators in bold type).}
\centering
\begin{tabular}{lrrrrr}\toprule
\multicolumn{1}{p{2.3cm}}{Type of data:}  & \multicolumn{5}{c}{MARKET} \\ \midrule
Portfolio & \multicolumn{5}{c}{Estimator type} \\[2mm]
 & $\hat{\var}_{\alpha}^{\textrm{emp}}$ & $\hat{\var}_{\alpha}^{\textrm{norm}}$ & $\hat{\var}_{\alpha}^{\textrm{CF}}$ & $\hat{\var}_{\alpha}^{\textrm{GPD}}$ & $\hat{\var}_{\alpha}^{\textrm{u}}$  \\ 
  \midrule
LoBM.LoOP & 2.005 & 2.019 & 1.992 & 1.984 & 2.012 \\ 
  BM1.OP2 & 1.964 & 1.948 & 1.937 & 1.910 & 1.944 \\ 
  BM1.OP3 & 1.692 & 1.693 & 1.669 & 1.677 & 1.691 \\ 
  BM1.OP4 & 1.389 & 1.402 & 1.383 & 1.383 & 1.399 \\ 
  LoBM.HiOP & 1.335 & 1.350 & 1.333 & 1.342 & 1.347 \\ 
  $\cdots$ & $\cdots$ & $\cdots$ & $\cdots$ & $\cdots$   \\
  \midrule
  mean & 1.780 & 1.770 & 1.761 & 1.761 & 1.766  \\ 
  \midrule \\[-5.5mm] \midrule
\multicolumn{1}{p{2.3cm}}{Type of data:}  & \multicolumn{5}{c}{SIMULATED} \\ \midrule
 & $\hat{\var}_{\alpha}^{\textrm{emp}}$ & $\hat{\var}_{\alpha}^{\textrm{norm}}$ & $\hat{\var}_{\alpha}^{\textrm{CF}}$ & $\hat{\var}_{\alpha}^{\textrm{GPD}}$ & $\hat{\var}_{\alpha}^{\textrm{u}}$  \\ 
  \midrule
LoBM.LoOP & 1.898 & 1.859 & 1.867 & 1.874 & {\bf 1.857} \\ 
  BM1.OP2 & 1.938 & 1.898 & 1.901 & 1.904 & {\bf 1.899} \\ 
  BM1.OP3 & 1.599 & 1.557 & 1.558 & 1.574 & {\bf 1.554} \\ 
  BM1.OP4 & 1.292 & 1.245 & 1.254 & 1.272 & {\bf 1.244} \\ 
  LoBM.HiOP & 1.195 & 1.177 & 1.182 & 1.188 & {\bf 1.177} \\ 
  $\cdots$ & $\cdots$ & $\cdots$ & $\cdots$ & $\cdots$   \\
  \midrule
  mean & 1.714 & 1.684 & 1.690 & 1.698 & {\bf 1.683} \\ 
\bottomrule
   \end{tabular}
  \label{t:score1}
\end{table}

For the ES, the situation is more complex, as we need to consider the joint elicitability with $\var$. Following~\cite{Fissler2016} we use the (joint) $\var$-ES consistent scoring function
\begin{equation}\label{eq:score.fun.2}
S(x_1,x_2,y)=(\1_{\{x_1\geq y\}}-\alpha)(x_1-y)+\frac{1}{\alpha}\frac{e^{x_2}}{1+e^{x_2}}\1_{\{x_1\geq y\}}(x_1-y)+\frac{e^{x_2}}{1+e^{x_2}}(x_2-x_1)-\frac{e^{x_2}}{1+e^{x_2}}.
\end{equation}
As in Section~\ref{sec5.3}, for each ES estimator we use the associated $\var$ estimator. The results for market and simulated data for the corresponding mean score are presented in Table~\ref{t:score3}.

Repeated simulations of the above results show statistical significance of the differences,  both for $\var$ and ES (results not shown).

\begin{table}[htp!]
\caption{We run the joint $\var$-ES scoring performance test on the same data taken for Table \ref{t:boot0}. The sample is split into intervals of length 50 and on each subset the estimation is performed. The table presents the value of the mean score with score function given in Equation \eqref{eq:score.fun.2} for the empirical sample quantile ${\textrm{ES}}_{\alpha}^{\textrm{emp}}$, the  modified Cornish-Fisher estimator ${\textrm{ES}}_{\alpha}^{\textrm{CF}}$ the classical Gaussian estimator ${\textrm{ES}}_{\alpha}^{\textrm{norm}}$, the GPD estimator ${\textrm{ES}}_{\alpha}^{\textrm{GPD}}$, and the  unbiased estimator  ${\textrm{ES}}_{\alpha}^{\textrm{u}}$. In each case, the associated $\var$ estimator is used. The row labelled "mean" shows  the mean of all numbers (25 - here we include all portfolios) in a given column. For the Gaussian data, in most of the cases, the values of mean score for the biased estimators are lower than for the unbiased estimators, suggesting a good performance of the latter ones (values of the unbiased estimators in bold type).}
\centering
\begin{tabular}{lrrrrr}\toprule
\multicolumn{1}{p{2.3cm}}{Type of data:}  & \multicolumn{5}{c}{MARKET} \\ \midrule
Portfolio & \multicolumn{5}{c}{Estimator type} \\[2mm]
 & ${\textrm{ES}}_{\alpha}^{\textrm{emp}}$ & ${\textrm{ES}}_{\alpha}^{\textrm{norm}}$ & ${\textrm{ES}}_{\alpha}^{\textrm{CF}}$ & ${\textrm{ES}}_{\alpha}^{\textrm{GPD}}$ & ${\textrm{ES}}_{\alpha}^{\textrm{u}}$  \\ 
  \midrule
LoBM.LoOP & -0.4882 & -0.4879 & -0.4880 & -0.4883 & -0.4881 \\ 
  BM1.OP2 & -0.4886 & -0.4885 & -0.4886 & -0.4888 & -0.4887 \\ 
  BM1.OP3 & -0.4900 & -0.4900 & -0.4899 & -0.4900 & -0.4902 \\ 
  BM1.OP4 & -0.4917 & -0.4918 & -0.4918 & -0.4919 & -0.4919 \\ 
  LoBM.HiOP & -0.4920 & -0.4921 & -0.4921 & -0.4921 & -0.4922 \\ 
  $\cdots$ & $\cdots$ & $\cdots$ & $\cdots$ & $\cdots$   \\
  \midrule
  mean & -0.4895 &-0.4895 &-0.4896 &-0.4897 &-0.4897 \\
  \midrule \\[-5.5mm] \midrule
\multicolumn{1}{p{2.3cm}}{Type of data:}  & \multicolumn{5}{c}{SIMULATED} \\ \midrule
 & ${\textrm{ES}}_{\alpha}^{\textrm{emp}}$ & ${\textrm{ES}}_{\alpha}^{\textrm{norm}}$ & ${\textrm{ES}}_{\alpha}^{\textrm{CF}}$ & ${\textrm{ES}}_{\alpha}^{\textrm{GPD}}$ & ${\textrm{ES}}_{\alpha}^{\textrm{u}}$  \\ 
  \midrule
LoBM.LoOP & -0.4886 & -0.4888 & -0.4888 & -0.4889 & {\bf -0.4891} \\ 
  BM1.OP2 & -0.4886 & -0.4889 & -0.4888 & -0.4889 & {\bf -0.4891} \\ 
  BM1.OP3 & -0.4902 & -0.4906 & -0.4905 & -0.4905 & {\bf -0.4908} \\ 
  BM1.OP4 & -0.4922 & -0.4925 & -0.4924 & -0.4923 & {\bf -0.4927} \\ 
  LoBM.HiOP & -0.4928 & -0.4929 & -0.4928 & -0.4929 & {\bf -0.4931} \\ 
  $\cdots$ & $\cdots$ & $\cdots$ & $\cdots$ & $\cdots$   \\
  \midrule
  mean & -0.4896 & -0.4899 & -0.4898 & -0.4898 & {\bf -0.4901} \\ 
\bottomrule
   \end{tabular}
  \label{t:score3}
\end{table}

\section{Conclusion}\label{sec:conclusion}
In this article, we studied the estimation of risk, with a particular view on \emph{unbiased}  estimators and backtesting. 
The new notion of unbiasedness introduced is motivated from economic principles rather than from statistical reasoning, which links this concept to a better performance in backtesting. Some unbiased estimators, for example the unbiased estimator for value-at-risk in the Gaussian case, can be computed in closed form while for many other cases numerical methods are available. A small empirical analysis underlines the excellent performance of the unbiased estimators with respect to standard backtesting measures. 

%

 {\small

 }

 \end{document}